\newcommand {\ONE}     {I}
\newcommand {\tG} {{\widetilde{G}}}
\newcommand {\tH} {{\widetilde{H}}}
\DeclareMathOperator {\vol}  {vol}
\DeclareMathOperator {\argmax} {argmax}
\DeclareMathOperator {\VALUE}  {value}
\DeclareMathOperator {\OPT}  {OPT}
\newcommand {\roundup}   [1] {{\lceil {#1} \rceil}}
\newcommand {\rounddown} [1] {{\lfloor {#1} \rfloor}}
\newcommand {\set}   [1] {\left\{ #1 \right\}}
\DeclareMathOperator {\Exp}       {\mathbb{E}}
\newcommand {\Prob}  [1] {\Pr \left\{#1 \right\}}
\newcommand {\E}     [1] {\mathbb{\Exp}\left[#1\right]}
\newcommand{\given}{\;\mid\;}
\newcommand{\Given}{\;\;\mid\;\;}
\newcommand {\bbR}    {\mathbb{R}}
\newcommand {\calC}   {{\cal{C}}}
\newcommand {\calE}   {{\cal{E}}}
\newcommand {\calW}   {{\cal{W}}}
\newcommand {\calR}   {{\cal{R}}}
\newcommand {\calS}   {{\cal{S}}}
\newcommand {\calP}   {{\cal{P}}}
\newcommand {\NP}     {{\cal{NP}}}
\newcommand {\QP}     {{\cal{QP}}}
\newcommand {\BPQP}     {{\cal{BPQP}}}
\newcommand{\R}{\mathbb{R}}
\newcommand {\OPTQAP} {\OPT_{\text{\tiny{QAP}}}}
\newcommand {\VALUEQAP} {\VALUE_{\text{\tiny{QAP}}}}
\newcommand {\OPTLC} {\OPT_{\text{\tiny{LC}}}}
\newcommand {\VALUELC} {\VALUE_{\text{\tiny{LC}}}}
\newcommand {\MAXQAP} {\textsc{MaxQAP}\xspace}
\newcommand{\defeq}{\stackrel{\mathrm{def}}=}
\newtheorem{theorem}{Theorem}[section]
\newtheorem{corollary}[theorem]{Corollary}
\newtheorem{lemma}[theorem]{Lemma}
\newtheorem{remark}[theorem]{Remark}
\newtheorem{definition}[theorem]{Definition}
\newtheorem{claim}[theorem]{Claim}
\title {Maximum Quadratic Assignment Problem:\\
\large{Reduction from Maximum Label Cover and}\\
\large{LP-based Approximation Algorithm\footnote{The conference version of the paper appeared at ICALP 2010.}}}
\author {Konstantin Makarychev\footnote{Microsoft Research, Redmond,
    WA 98122, USA}%
  \and Rajsekar Manokaran\footnote{School of CSC, KTH, Stockholm, Sweden.  Work done while visiting
    IBM TJ Watson Research Center, NY, USA.}%
  \and Maxim Sviridenko\footnote{Yahoo! Labs}} \date{}
\begin{document}
\maketitle

\begin{abstract}
  We show that for every positive $\varepsilon > 0$, unless $\NP
  \subset \BPQP$, it is impossible to approximate the maximum
  quadratic assignment problem within a factor better than
  $2^{\log^{1-\varepsilon} n}$ by a reduction from the maximum label
  cover problem. Our result also implies that Approximate Graph
  Isomorphism is not robust and is in fact, $1 - \epsilon$ vs
  $\epsilon$ hard assuming the Unique Games Conjecture.
  
  Then, we present an $O(\sqrt{n})$-approximation
  algorithm for the problem based on rounding of the linear
  programming relaxation often used in the state of the art exact
  algorithms.
\end{abstract}

\section{Introduction}
In this paper we consider the Quadratic Assignment Problem.  An
instance of the problem, $\Gamma = (G, H)$ is specified by two
weighted graphs $G = (V_G, w_G)$ and $H = (V_H, w_H)$ such that $|V_G|
= |V_H|$ (we denote $n=|V_G|$). The set of feasible solutions consists
of bijections from $V_G$ to $V_H$. For a given bijection $\varphi$ the
objective function is

\begin{equation} \VALUEQAP(\Gamma, \varphi) = \sum_{(u,v) \in V_G
    \times V_G} w_G(u,v) w_H(\varphi(u), \varphi(v)). \label{objective}
\end{equation}

There are two variants of the problem the Minimum Quadratic Assignment
Problem and the Maximum Quadratic Assignment Problem (\MAXQAP)
depending on whether the objective function (\ref{objective}) is to be
minimized or maximized. The problem was first defined by Koopmans and
Beckman~\cite{KB} and sometimes this formulation of the problem is
referred to as Koopmans-Beckman formulation of the Quadratic
Assignment Problem.  Both variants of the problem model an
astonishingly large number of combinatorial optimization problems such
as traveling salesman, maximum acyclic subgraph, densest subgraph and
clustering problems to name a few. It also generalizes many practical
problems that arise in various areas such as modeling of backboard
wiring~\cite{S}, campus and hospital layout~\cite{DH,E},
scheduling~\cite{GG} and many others~\cite{EL,LM}. The surveys and
books~\cite{A,bcpp,c,BDM,labhq,pw} contain an in-depth treatment of
special cases and various applications of the Quadratic Assignment
Problem.

The Quadratic Assignment Problem is an extremely difficult
optimization problem. The state of the art exact algorithms can solve
instances with approximately 30 vertices, so a lot of research effort
was concentrated on constructing good heuristics and relaxations of
the problem.

{\bf Previous Results. } The Minimum Quadratic Assignment Problem is
known to be hard to approximate even under some very restrictive
conditions on the weights of graphs $G$ and $H$.  In particular, even
when $H$ induces a line metric, any polynomial factor approximation
(in polynomial time) implies that $\calP = \NP$~\cite{Q}.  Polynomial
time exact~\cite{c} and approximation algorithms~\cite{hls} are known
for very specialized instances.

In contrast, \MAXQAP seem to be more tractable. Barvinok \cite{B}
constructed an approximation algorithm with performance guarantee
$\varepsilon n$ for any $\varepsilon>0$. Nagarajan and
Sviridenko~\cite{NS} designed $O(\sqrt{n}\log^2n)$-approximation
algorithm by utilizing approximation algorithms for the minimum
vertex cover, densest $k$-subgraph and star packing problems. For the
special case when one of the edge weight functions ($w_G$ or $w_H$)
satisfy the triangle inequality there are combinatorial
$4$-approximation~\cite{AHS} and LP-based $3.16$-approximation
algorithms~\cite{NS}. Another tractable special case is the so-called
dense Quadratic Assignment Problem~\cite{AFK}. This special case
admits a sub-exponential time approximation scheme and in some cases
it could be implemented in polynomial time \cite{AFK,FK}. On the
negative side, APX-hardness of \MAXQAP is implied by the APX-hardness
of its special cases, e.g. Traveling Salesman Problem with Distances
One and Two~\cite{PY}.

An interesting special case of \MAXQAP is the Densest $k$-Subgraph
Problem. The best known algorithm by Bhaskara, Charikar, Chlamtac,
Feige, and Vijayaraghavan~\cite{BCCFV} gives a $O(n^{1/4})$
approximation.  However, the problem is not even known to be APX-hard
(under standard complexity assumptions).  Feige~\cite{F02} showed that
the Densest $k$-Subgraph Problem does not admit a $\rho$-approximation
(for some universal constant $\rho > 1$) assuming that random 3-SAT
formulas are hard to refute.  Khot~\cite{Khot} ruled out PTAS for the
problem under the assumption that $\NP$ does not have randomized
algorithms that run in sub-exponential time.

{\bf Our Results. } Our first result is the first
superconstant non-approximability for \MAXQAP.
We show that for every positive $\varepsilon > 0$, unless
$\NP \subset \BPQP$ ($\BPQP$ is
the class of problems solvable in randomized
quasipolynomial-time), it is impossible to
approximate the maximum quadratic assignment problem with the
approximation factor better than $2^{\log^{1-\varepsilon} n}$.
Particularly, there is no  polynomial time
poly-logarithmic approximation algorithms for \MAXQAP under the above complexity assumption.
It is an interesting open
question if our techniques can be used to obtain a similar result for
the Densest $k$-Subgraph Problem.

Our second result is an $O(\sqrt{n})$-approximation algorithm based on
rounding of the optimal solution of the linear programming relaxation.
The LP relaxation was first considered by Adams and Johnson~\cite{AJ}
in 1994.
As a consequence of our result we obtain a bound of $O(\sqrt{n})$ on
the integrality gap of this relaxation that almost matches a lower
bound of $\Omega(\sqrt{n}/\log n)$ of~Nagarajan and
Sviridenko~\cite{NS}.  Note, that the previous
$O(\sqrt{n}\log^2n)$-approximation algorithm~\cite{NS} was not based
on the linear programming relaxation, and therefore no non-trivial
upper bound on the integrality gap of the LP was known.

{\bf Note Added in Proof.} Suppose that the graphs $G$ and $H$ have the same number of
edges. Then, $G$ and $H$ are isomorphic if and only if the optimal value of the unweighted
Maximum Quadratic Assignment problem equals 1. This observation gives another name to the problem: The unweighted version of Maximum Quadratic
Assignment is also known as~\textit{Approximate Graph Isomorphism}. In 
Approximate Graph Isomorphism, it is natural to divide
the objective function by $|E_G|=|E_H|$, then for isomorphic graphs $G$ and $H$, the optimal
objective value is 1. We do not know the complexity of the (exact)
Graph Isomorphism problem, and hence we do not know whether finding the exact solution 
for satisfiable instances of Approximate Graph Isomorphism (i.e., instances of value $1$) is 
$\NP$-hard. In several recent works~\cite{AKKV, DWWZ} (published after the conference version of this paper appeared at ICALP~2010), the authors asked what can be done if the instance is almost satisfiable i.e., the value of the 
optimal solution is at least $(1-\varepsilon)$. The immediate corollary of our result is that it 
is not possible to distinguish instances of value at least $(1-\varepsilon)$ and instances of value at most $\delta$ 
in randomized polynomial time for every positive $\varepsilon$ and $\delta$. This result holds assuming that the (randomized) Unique Games Conjecture holds. In other words, we assume that for every positive $\varepsilon$ and $\delta$,
there is no randomized polynomial-time algorithm that distinguishes $(1-\varepsilon)$ satisfiable instances of 
Unique Games and $\delta$ satisfiable instances of Unique Games. To get the result, in the reduction we present below, we need  to use an instance of MAX $\Gamma$-Lin$(k)$ instead of an arbitrary instance of Label Cover, 
the graph $G$ contains $k$ copies of the constraint graph of the MAX $\Gamma$-Lin$(k)$ instance, the graph $H$ is 
the label-extended graph of the MAX $\Gamma$-Lin$(k)$ instance.

\section{Hardness of Approximation}

A weighted graph $G = (V, w)$ is specified by a vertex set $V$ along
with a weight function $w: V \times V \to \R$ such that for every $u,
v \in V$, $w(u, v) = w(v, u)$ and $w(u,v)\geq 0$.  An edge $e = (u, v)$ is said to be
present in the graph $G$ if $w(u, v)$ is non-zero.

We prove the inapproximability of the \MAXQAP problem via an
approximation preserving poly-time randomized reduction from the Label
Cover problem defined below.

\begin{definition}[Label Cover Problem]
  An instance of the label cover problem denoted by $\Upsilon = (G =
  (V_G, E_G), \pi, [k])$ consists of a graph $G$ on $V_G$ with edge
  set $E_G$ along with a set of labels $[k] = \{0, 1, \ldots k-1\}$.
  For each edge $(u, v) \in E_G$, there is a constraint $\pi_{uv}$, a
  subset of $[k] \times [k]$ defining the set of accepted labelings
  for the end points of the edge.  The goal is to find a labeling of
  the vertices, $\Lambda: V_G \to [k]$ maximizing the total fraction
  of the edge constraints satisfied. We will denote the optimum of an
  instance $\Upsilon$ by $\OPTLC(\Upsilon)$.  In other words,
  \[ \OPTLC (\Upsilon) \defeq \max_{\Lambda: V_G \to
    [k]}\frac{1}{|E_G|}\sum_{(u,v) \in E}\ I((\Lambda(u), \Lambda(v))
  \in \pi_{uv}), \]where $\ONE (\cdot)$ is the indicator of an event.
  We denote the optimum by $\OPTQAP(\Gamma)$.  We will denote the
  fraction of edges satisfied by a labeling $\Lambda$ by
  $\VALUELC(\Upsilon, \Lambda)$.
\end{definition}


The PCP theorem~\cite{PCP1,PCP2}, along with the Raz parallel repetition theorem~\cite{Raz} shows that
the label cover problem is hard to approximate within
a factor of $2^{\log^{1-\varepsilon}{n}}$.

\begin{theorem}[see e.g., Arora and Lund~\cite{PCP0}]\label{thm:lc-hardness}
If $\NP\not\subset \QP$, then for every positive $\varepsilon > 0$,
it is not possible to distinguish satisfiable instances of the label
cover problem from instances with optimum
at most $2^{-\log^{1-\varepsilon}{n}}$ in polynomial time.
\end{theorem}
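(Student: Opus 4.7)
The plan is to combine the PCP theorem with Raz's parallel repetition theorem. The PCP theorem yields a polynomial-time reduction from SAT to a two-prover one-round projection game, equivalently a Label Cover instance $\Upsilon_0$, with perfect completeness and some absolute constant soundness $\rho_0 < 1$: satisfiable SAT instances map to Label Cover instances with $\OPTLC = 1$ and unsatisfiable ones to instances with $\OPTLC \leq \rho_0$.

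To amplify this constant gap into the claimed super-polynomial one, I would apply $k$-fold parallel repetition to produce a Label Cover instance $\Upsilon_0^{\otimes k}$: vertices are $k$-tuples of original vertices, edges are $k$-tuples of original edges, labels live in $[k]^k$, and a product edge is satisfied iff every coordinate constraint is satisfied. Perfect completeness is preserved. Raz's parallel repetition theorem (and its sharper form for projection games due to Holenstein) implies that the repeated game has soundness at most $\rho_1^{\Omega(k)} = 2^{-\Omega(k)}$ for some constant $\rho_1 < 1$ depending only on $\rho_0$ and the alphabet.

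The remaining step is choosing $k$ to balance the blow-up against the target soundness. If the PCP output has size $m = \mathrm{poly}(n)$, the repeated instance has size $N = m^{O(k)}$, so $\log N = \Theta(k\log m)$. The target soundness $2^{-\log^{1-\varepsilon} N}$ is obtained whenever $\Omega(k) \geq (k \log m)^{1-\varepsilon}$, i.e., as soon as $k \geq (\log m)^{(1-\varepsilon)/\varepsilon}$. Taking $k = (\log n)^{c}$ for a sufficiently large constant $c = c(\varepsilon)$ gives an output instance of size $N = 2^{\mathrm{polylog}(n)}$, quasi-polynomial in $n$. Thus any polynomial-time algorithm distinguishing the two cases for Label Cover of size $N$ would decide SAT in time $\mathrm{poly}(N) = 2^{\mathrm{polylog}(n)}$, contradicting $\NP \not\subset \QP$.

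The main technical obstacle is invoking parallel repetition with the correct exponential decay: naive amplification does \emph{not} give $(1-\rho_0)^k$ for general two-prover games, and one genuinely needs Raz's theorem (or one of its later refinements). Once that is taken as a black box, the remaining bookkeeping—propagating perfect completeness, keeping track of instance sizes, and solving for $k$ in terms of $\varepsilon$—is routine.
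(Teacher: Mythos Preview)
Your proposal is correct and follows the standard argument the paper alludes to: the paper does not give its own proof but simply cites the result as a consequence of the PCP theorem together with Raz's parallel repetition theorem, which is exactly the route you take. One small slip: the label set of the $k$-fold repeated game is $[L]^k$ where $[L]$ is the original alphabet, not $[k]^k$; this does not affect the argument.
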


We will show an approximation preserving reduction from a label cover
instance to a \MAXQAP instance such that: If the label cover instance
$\Upsilon$ is completely satisfiable, then the \MAXQAP instance $\Gamma$
will have optimum $1$; on the other hand, if $\OPTLC(\Upsilon)$ is at
most $\delta$, then no bijection $\varphi$ obtains a value greater
than $O(\delta)$.

Strictly speaking, the problem is not well defined when the graphs $G$
and $H$ do not have the same number of vertices.  However, in our
reduction, we will relax this condition by letting $G$ have fewer
vertices than $H$, and allowing the map $\varphi$ to be only injective
(i.e., $\varphi(u) \neq \varphi(v)$, for $u \neq v$).  The reason is
that we can always add enough isolated vertices to $G$ to satisfy
$|V_G| = |V_H|$. We also assume that the graphs are unweighted, and
thus given an instance $\Gamma$ consisting of
two graphs $G= (V_G,E_G)$ and $H=(V_H,E_H)$, the goal is to find an injective map $\varphi: V_G \to V_H$, so as maximize
$$\VALUEQAP (\Gamma, \varphi) = \sum_{(u,v)\in E_G} \ONE ((\varphi(u),\varphi(v))\in E_H). $$

Informally, our reduction does the following.  Given an instance
$\Upsilon = (G = (V_G, E_G), \pi, [k])$ of the label cover problem,
consider the \emph{label extended} graph $H$ on $V_G \times [k]$ with
edges $((u, i)-(v, j))$ for every $(u, v) \in E_G$ and every accepting
label pair $(i, j) \in \pi_{uv}$.  Every labeling $\Lambda$ for
$\Upsilon$ naturally defines an injective map, $\varphi$ between $V_G$
and $V_G \times [k]$: $\varphi(u) = (u, \Lambda(u))$.  Note that
$\varphi$ maps edges satisfied by $\Lambda$ onto edges of $H$.
Conversely, given an injection $\varphi: V_G \to V_G \times [k]$ such that
$\varphi(u) \in \{u\} \times [k]$ for every $u \in V_G$, we can
construct a labeling $\Lambda$ for $\Upsilon$ satisfying exactly the
constraint edges in $G$ which were mapped on to edges of $H$.
However, the requirement that $\varphi(u) = (u, \Lambda(u))$
is crucial for the converse to hold: an arbitrary injective map might not correspond
to any labeling of the label cover $\Upsilon$.

To overcome the above shortcoming, we modify the graphs $G$ and $H$ as
follows.  We replace each vertex $u$ in $G$ with a ``cloud'' of
vertices $\{(u,i): i \in [N]\}$ and each vertex $(u,x)$ in $H$ with a
cloud of vertices $\{(u,x,i): i \in [N]\}$, each index $i$ is from a
significantly large set $[N]$.  Call the new graphs $\tG$ and $\tH$
respectively.

For every edge $(u, v) \in E_G$, the corresponding clouds in $\tG$ are
connected by a random bipartite graph where each edge occurs with
probability $\alpha$.  We do this independently for each edge in $E_G$.
For every accepting pair $(x, y) \in \pi_{uv}$, we copy the
``pattern'' between the clouds $(u, x, \star)$ and $(v, y, \star)$ in
$\tH$.

As before, every solution of the label cover problem $u \mapsto
\Lambda(u)$ corresponds to the map $(u,i) \mapsto
(u,\Lambda(u),i)$ which maps every ``satisfied'' edge of $\tilde{G}$
to an edge of $\tilde{H}$. However, now, we may assume that every $(u, i)$ is
mapped to some $(u, x, i)$, since, loosely speaking, the pattern of
edges between $(u, \star)$ and $(v, \star)$ is unique for each edge
$(u, v)$: there is no way to map the \textit{cloud} of $u$ to the
\textit{cloud} of $u'$ and the \textit{cloud} of $v$ to the
\textit{cloud} of $v'$ (unless $u = u'$ and $v = v'$), so that more
than an $\alpha$ fraction of the edges of one cloud are mapped on
edges of the other cloud.  We will make the above discussion formal in
the rest of this section.


\rule{0pt}{12pt}
\hrule height 0.8pt
\rule{0pt}{1pt}
\hrule height 0.4pt
\rule{0pt}{6pt}

\noindent\textbf{Hardness Reduction}

    \smallskip
    \noindent\textbf{Input}: A label cover instance $\Upsilon
    = (G = (V_G, E_G), \pi, [k])$.%

    \smallskip%
    \noindent\textbf{Output}: A \MAXQAP instance $\Gamma = (\tG, \tH)$;
    $\tG = (V_\tG, E_\tG)$, $\tH = (V_\tH, E_\tH)$.%

    \smallskip %
    \noindent\textbf{Parameters}: Let $N = \roundup{n^4|E_G|k^5}$ and $\alpha = \nicefrac{1}{n}$.
\begin{itemize}
 \item Define $V_{\tG} = V_G \times [N]$ and $V_{\tH} = V_G \times [k] \times [N]$.%
 \item For every edge $(u,v)$ of $G$ pick a random set of pairs
   $\calE_{uv}\subset [N]\times[N]$. Each pair $(i,j)\in [N]\times
   [N]$ belongs to $\calE_{uv}$ independently with probability $\alpha$.
 \item For every edge $(u,v)$ of $G$ and every pair $(i,j)$ in
   $\calE_{uv}$, add an edge $((u,i), (v,j))$ to $\tG$. Then
$$E_{\tG} = \{((u,i), (v,j)): (u,v) \in E_G \text{ and } (i, j) \in \calE_{uv}\}.$$
\item For every edge $(u,v)$ of $G$, every pair $(i,j)$ in
  $\calE_{uv}$, and every pair $(x,y)$ in $\pi_{uv}$, add an edge
  $((u,x, i), (v, y,j))$ to $\tH$. Then
$$E_{\tH} = \{((u,x,i), (v,y,j)): (u,v) \in E_G \text{, } (i, j) \in \calE_{uv} \text{ and }
(x,y)\in\pi_{uv}\}.$$
\end{itemize}
\rule{0pt}{1pt}
\hrule height 0.4pt
\rule{0pt}{1pt}
\hrule height 0.8pt
\rule{0pt}{12pt}

\bigskip
It is easy to see that the reduction runs in polynomial time.
In our reduction, both $k$ and $N$ are polynomial in $n$.

We will now show that the reduction is in fact approximation
preserving with high probability.  In the rest of the section, we will
assume $\Gamma = (\tG, \tH)$ is a \MAXQAP instance obtained from a
label cover instance $\Upsilon$ using the above reduction with
parameteres $N$ and $\alpha$.  Note that $\Gamma$ is a random variable.

We will first show that if the label cover instance has a good
labeling, then the \MAXQAP instance output by the above reduction has a
large optimum.  The following claim, which follows from a simple
concentration inequality, shows that the graph $\tG$ has, in fact, as
many edges as expected.

\begin{claim}\label{claim:conc-total} With high probability, $\tG$
  contains at least $\alpha |E_G| N^2 / 2$ edges. 
\end{claim}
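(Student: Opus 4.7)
The plan is a direct Chernoff bound. Observe that $|E_{\tG}|$ is a sum of independent Bernoulli indicators, one for each triple $((u,v), i, j)$ with $(u,v) \in E_G$ and $(i,j) \in [N] \times [N]$: namely, the indicator that $(i,j) \in \calE_{uv}$, which is $1$ with probability $\alpha$ independently across all triples. Thus
\[
\Exp[|E_{\tG}|] = \alpha |E_G| N^2.
\]

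I would then apply the standard multiplicative Chernoff bound in the form $\Pr[X \le \Exp[X]/2] \le \exp(-\Exp[X]/8)$ to this sum of independent indicators. With the reduction's parameters $\alpha = 1/n$ and $N = \lceil n^4 |E_G| k^5 \rceil$, the expectation $\alpha |E_G| N^2$ is bounded below by a large polynomial in $n$, so the failure probability is of the form $\exp(-\mathrm{poly}(n))$, which is much smaller than $1/\mathrm{poly}(n)$. Hence with high probability (indeed with probability $1 - n^{-\omega(1)}$),
\[
|E_{\tG}| \ge \tfrac{1}{2}\,\alpha |E_G| N^2,
\]
which is exactly the statement of the claim.

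There is no real obstacle: the only thing to be careful about is verifying that the edge indicators are truly mutually independent (they are, by construction: the set $\calE_{uv}$ is chosen independently for each edge $(u,v) \in E_G$, and within each $\calE_{uv}$ each pair $(i,j)$ is included independently), and that $\Exp[|E_{\tG}|]$ is large enough for the Chernoff tail to be subpolynomial. Both are immediate from the definition of the reduction and the choice of $N$.
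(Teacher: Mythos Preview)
Your proposal is correct and matches the paper's approach: the paper does not spell out a proof but simply states that the claim ``follows from a simple concentration inequality,'' which is exactly the Chernoff-bound argument you give. The observation that $|E_{\tG}|$ is a sum of independent $\mathrm{Bernoulli}(\alpha)$ variables with mean $\alpha |E_G| N^2$, together with the parameter choice making this mean polynomially large, is precisely what the paper has in mind.
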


\begin{lemma}[Completeness]\label{lem:completeness} Let $\Upsilon$ be
  a satisfiable instance of the Label Cover Problem. Then there exists
  a map of $\tG$ to $\tH$ that maps every edge of $\tG$ to an edge of
  $\tH$. Thus, $\OPTQAP(\Gamma) = |E_{\tG}|$.
\end{lemma}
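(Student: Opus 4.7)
The plan is to exhibit the obvious labeling-induced injection and verify that it is edge-preserving. Let $\Lambda: V_G \to [k]$ be a labeling of $\Upsilon$ that satisfies every constraint, i.e., $(\Lambda(u), \Lambda(v)) \in \pi_{uv}$ for all $(u,v) \in E_G$; such a $\Lambda$ exists because $\Upsilon$ is satisfiable. Define $\varphi: V_{\tG} \to V_{\tH}$ by $\varphi(u,i) = (u, \Lambda(u), i)$.

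First, I would check that $\varphi$ is an injection: if $\varphi(u,i) = \varphi(u',i')$, comparing the first and third coordinates directly yields $u=u'$ and $i=i'$. Next, I would verify edge-preservation by unwinding the definitions. Pick any edge $((u,i),(v,j)) \in E_{\tG}$. By the construction of $\tG$, this means $(u,v) \in E_G$ and $(i,j) \in \calE_{uv}$. Since $\Lambda$ satisfies the constraint on $(u,v)$, we have $(\Lambda(u),\Lambda(v)) \in \pi_{uv}$. Plugging these three facts into the definition of $E_{\tH}$ gives $((u,\Lambda(u),i),(v,\Lambda(v),j)) \in E_{\tH}$, which is exactly $(\varphi(u,i), \varphi(v,j))$. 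Hence every edge of $\tG$ is mapped to an edge of $\tH$, so $\VALUEQAP(\Gamma,\varphi) = |E_{\tG}|$. Combining with the trivial upper bound $\VALUEQAP(\Gamma,\varphi') \le |E_{\tG}|$ (the sum has only $|E_{\tG}|$ indicator terms) yields $\OPTQAP(\Gamma) = |E_{\tG}|$.

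There is no real obstacle: the entire statement is a definitional check, and the reduction was set up precisely so that labelings correspond to edge-preserving injections of the cloud structure. The only minor technicality, that the formal \MAXQAP objective is stated in terms of a bijection between equal-size vertex sets, is handled by the preceding remark in the paper that $\tG$ can be padded with isolated vertices until $|V_{\tG}| = |V_{\tH}|$ without changing either the objective value or $|E_{\tG}|$.
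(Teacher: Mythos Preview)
Your proof is correct and follows exactly the same approach as the paper: define $\varphi(u,i)=(u,\Lambda(u),i)$ from a satisfying labeling $\Lambda$ and verify by unwinding the definitions of $E_{\tG}$ and $E_{\tH}$ that every edge is preserved. The only additions you make---explicitly checking injectivity and noting the trivial upper bound $\VALUEQAP\le |E_{\tG}|$---are minor elaborations, not a different argument.
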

\begin{proof}
  Let $u \mapsto \Lambda(u)$ be the solution of the label cover that
  satisfies all constrains. Define the map $\varphi : V_{\tG} \to
  V_{\tH}$ as follows $\varphi (u,i) = (u, \Lambda(u), i)$. Suppose
  that $((u,i), (v,j))$ is an edge in $\tG$. Then $(u,v)\in E_G$ and
  $(i,j)\in \calE_{uv}$. Since the constraint between $u$ and $v$ is
  satisfied in the instance of the label cover,
  $(\Lambda(u),\Lambda(v)) \in \pi_{uv}$. Thus, $((u, \Lambda(u), i),
  (v, \Lambda(v), j)) \in E_{\tH}$.
\end{proof}

Next, we will bound the optimum of $\Gamma$ in terms of the value of
the label cover instance $\Upsilon$.  We do this in two steps.  We
will first show that for a fixed map $\varphi$ from $V_\tG$ to $V_\tH$
the expected value of $\Gamma$ can be bounded as a function of the
optimum of $\Upsilon$.  Note that this is well defined as $V_\tG$ and
$V_\tH$ are determined by $\Upsilon$ and $N$ (and
independent of the randomness used by the reduction).  Next, we show
that the value is, in fact, tightly concentrated around the expected
value.  Then, we do a simple union bound over all possible $\varphi$
to obtain the desired result.  In what follows, $\varphi$ is a fixed
injective map from $V_\tG$ to $V_\tH$. Denote the first, second and third components of $\varphi$ by
  $\varphi_{V}$, $\varphi_{label}$ and $\varphi_{[N]}$ respectively.
  Then, $\varphi (u,i) = (\varphi_{V} (u,i), \varphi_{label} (u,i),
  \varphi_{[N]}(u,i))$.

\begin{lemma}\label{lem:bound-expectation} For every injective map $\varphi: V_\tG
  \to V_\tH$,
$$ \E{\VALUEQAP (\Gamma, \varphi)} \le \alpha |E_G| N^2 \times (\OPTLC(\Upsilon) + \alpha). $$
\end{lemma}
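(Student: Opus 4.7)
The plan is to expand $\E{\VALUEQAP(\Gamma,\varphi)}$ as a sum over quadruples $(u,v,i,j)$, classify the nonzero terms by whether the two underlying Bernoulli events coincide, and then bound the coincident piece by extracting a randomized labeling of $\Upsilon$ from $\varphi$. First, write
$$\E{\VALUEQAP(\Gamma,\varphi)} = \sum_{(u,v)\in E_G}\sum_{(i,j)\in [N]^2} \Pr\bigl[((u,i),(v,j))\in E_{\tG}\text{ and }(\varphi(u,i),\varphi(v,j))\in E_{\tH}\bigr],$$
and denote $\varphi(u,i)=(u_1,x_1,i_1)$, $\varphi(v,j)=(v_1,y_1,j_1)$. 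The first event is ``$(i,j)\in\calE_{uv}$'', and the second forces (among other things) ``$(i_1,j_1)\in\calE_{u_1v_1}$''.

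The key structural observation is that these two Bernoulli events are either identical --- which happens exactly in the \emph{identity} case $(u_1,v_1)=(u,v)$ and $(i_1,j_1)=(i,j)$ --- or they are independent (either because they involve different random sets $\calE$, or because they concern different entries of the same $\calE$, which are sampled independently). Hence a quadruple whose image is a valid edge of $\tH$ contributes $\alpha$ in the identity case and $\alpha^2$ in the non-identity case. Bounding the total number of non-identity contributions trivially by $|E_G|\,N^2$ yields
$$\E{\VALUEQAP(\Gamma,\varphi)} \;\le\; \alpha\,|A|+\alpha^2\,|E_G|\,N^2,$$
where $A$ is the set of identity quadruples that additionally satisfy $(\varphi_{label}(u,i),\varphi_{label}(v,j))\in\pi_{uv}$.

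The main work is to show $|A|\le\OPTLC(\Upsilon)\cdot|E_G|\,N^2$. For each $u\in V_G$ and $x\in [k]$, put $n_{u,x}=|\{i\in [N]:\varphi(u,i)=(u,x,i)\}|$; injectivity of $\varphi$ forces $\sum_{x}n_{u,x}\le N$ for every $u$. Unpacking the identity condition gives
$$|A| \;=\; \sum_{(u,v)\in E_G}\sum_{(x,y)\in\pi_{uv}} n_{u,x}\,n_{v,y}.$$
To link this with $\OPTLC$, I would introduce a randomized labeling $\Lambda:V_G\to [k]$ by independently setting $\Lambda(u)=x$ with probability $n_{u,x}/N$ (and assigning an arbitrary default label for the leftover mass). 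A direct computation then gives $\E{\VALUELC(\Upsilon,\Lambda)}=|A|/(|E_G|\,N^2)$; since this expectation is at most $\OPTLC(\Upsilon)$, the desired bound on $|A|$ follows, and combining with the previous display finishes the lemma.

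The only real pitfall is bookkeeping for ordered versus unordered edges of $G$, because each $\calE_{uv}$ is sampled once per undirected edge. Treating $E_G$ throughout as a set of ordered pairs (and recalling that $\pi_{vu}$ is the transpose of $\pi_{uv}$) absorbs the symmetric ``swapped'' identity case $(u_1,v_1)=(v,u)$, $(i_1,j_1)=(j,i)$ into the same count, so no separate factor is needed. The conceptual heart of the argument is the passage from the injection $\varphi$ to a randomized labeling of $\Upsilon$; everything else is counting.
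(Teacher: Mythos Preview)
Your proof is correct and follows essentially the same route as the paper: expand the expectation, split into the ``identity'' set $\calC_\varphi$ (your $A$) versus its complement, use independence to get the $\alpha^2|E_G|N^2$ term on the complement, and bound the identity contribution via a randomized labeling of $\Upsilon$. The only cosmetic difference is that the paper's randomized labeling sets $\Lambda(u)=\varphi_{label}(u,i)$ for a uniformly random $i\in[N]$ (using \emph{all} indices, not just the identity ones), which yields the slightly stronger bound $\sum_{(u,v)\in E_G}\sum_{i,j} \ONE((\varphi_{label}(u,i),\varphi_{label}(v,j))\in\pi_{uv})\le |E_G|N^2\OPTLC(\Upsilon)$ and avoids introducing the counts $n_{u,x}$.
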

\begin{proof}
Define a probabilistic labeling of $G$ as follows:
\textit{for every vertex $u$, pick a random $i\in [N]$, and assign label
$\varphi_{label}(u,i)$ to $u$ i.e., set $\Lambda (u) = \varphi_{label}(u,i)$}. The expected value of the solution to the Label Cover problem equals
\begin{eqnarray*}
\Exp_{\Lambda}[\VALUELC (\Upsilon, \Lambda)] &=&  \frac{1}{|E_G|}\sum_{(u,v)\in E_G} \Exp_{\Lambda}
[\ONE ((\Lambda(u), \Lambda(v)) \in \pi_{uv})]\\
&=&
\frac{1}{|E_G|}\sum_{(u,v)\in E_G} \frac{1}{N^2}\sum_{i, j \in [N]} \ONE ((\varphi_{label}(u,i), \varphi_{label}(v,j)) \in \pi_{uv}).
\end{eqnarray*}
Since $\VALUELC (\Upsilon, \Lambda) \leq \OPTLC(\Upsilon)$ for every labeling $u\mapsto \Lambda(u)$,
\begin{equation}\label{eq:expLC}
\sum_{(u,v)\in E_G}\sum_{i, j \in [N]} \ONE ((\varphi_{label}(u,i), \varphi_{label}(v,j)) \in \pi_{uv}) \leq
|E_G|\cdot N^2\cdot \OPTLC(\Upsilon).
\end{equation}
On the other hand,
\begin{eqnarray}
\E{\VALUEQAP (\Gamma, \varphi)} &=&
\E{\sum_{((u,i),(v,j))\in E_{\tG}} \ONE ((\varphi(u,i), \varphi(v,j)) \in E_\tH)}\nonumber\\
&=&
\sum_{(u,v)\in E_G} \sum_{i, j \in [N]}
\Prob{(i,j) \in \calE_{uv} \text{ and } (\varphi(u,i), \varphi(v,j)) \in E_\tH}.\label{eq:finalSum}
\end{eqnarray}

  Recall, that the goal of the whole
  construction was to \emph{force} the solution to map
  each $(u,i)$ to $(u,\varphi_{label}(u,i),i)$. Let
  $\calC_\varphi$ denote the set of quadruples that satisfy this
  property:
  \[ \calC_{\varphi} = \{(u,i,v,j): (u,v)\in E_G \text{ and }
  \varphi(u,i) = (u,\varphi_{label}(u,i),i),\; \varphi(v,j) =
  (v,\varphi_{label}(v,j),j)\}\ .\]

\noindent  If $(u,i,v,j) \in \calC_{\varphi}$, then
\begin{align*}
  \Pr \{(i,j) \in \calE_{uv} \text{ and } (\varphi(u,i), \varphi(v,j))
  &\in E_\tH\}\\
  &=\Prob{(i,j) \in \calE_{uv} \text{ and }(\varphi_{label}(u,i),
    \varphi_{label}(v,j))\in \pi_{uv}}\\
  &=\Prob{(i,j) \in \calE_{uv}}\cdot \ONE((\varphi_{label}(u,i),
  \varphi_{label}(v,j))\in \pi_{uv})\\
  &= \alpha \cdot \ONE((\varphi_{label}(u,i), \varphi_{label}(v,j))\in
  \pi_{uv}).
\end{align*}

If $(u,v)\in E_G$, but $(u,i,v,j) \notin \calC_{\varphi}$, then either $(i,j) \neq  (\varphi_{[N]}(u,i), \varphi_{[N]}(v,j))$ or $(u,v)\neq (\varphi_V(u,i),\varphi_V(v,j))$, and hence the events $\{(i,j) \in
\calE_{uv}\}$ and $\{(\varphi_{[N]}(u,i), \varphi_{[N]}(v,j)) \in
\calE_{\varphi_V(u,i)\varphi_V(v,j)}\}$ are independent. We have
\begin{multline*}
  \Prob{(i,j) \in \calE_{uv} \text{ and } (\varphi(u,i), \varphi(v,j))
    \in E_\tH}\leq \\
  \Prob{(i,j) \in \calE_{uv} \text{ and } (\varphi_{[N]}(u,i),
    \varphi_{[N]}(v,j)) \in \calE_{\varphi_V(u,i)\varphi_V(v,j)}}\leq \alpha^2.
\end{multline*}
Now, splitting summation~(\ref{eq:finalSum}) into two parts depending on
whether $(u,i,v,j) \in \calC_\varphi$, we have
$$\E{\VALUEQAP (\Gamma, (\varphi)} \le \alpha |E_G| N^2 \OPTLC(\Upsilon) + \alpha^2 |E_G| N^2.$$
\end{proof}

We use the following concentration inequality for Lipschitz functions on the boolean cube.
\begin{theorem}[McDiarmid~\cite{MDiar}, Theorem~3.1, p.~206] \label{thm:mcdiarmid}
  Let $X_1,\dots X_T$ be independent random variables taking values in
  the set $\{0,1\}$.  Let $f: \{0,1\}^T \to \bbR$ be a $K$-Lipschitz
  function i.e., for every $x,y \in \{0,1\}^T$, $|f(x) - f(y)| \leq K
  \|x-y\|_1$. Finally, let $\mu = \E{f(X_1,\dots,X_T)}$.  Then for
   every positive $\varepsilon$,
$$\Prob{f(X_1,\dots,X_n) - \mu \geq \varepsilon} \leq e^{\frac{-2\varepsilon^2}{TK^2}}.$$
\end{theorem}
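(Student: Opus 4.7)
The plan is to prove this via the standard martingale method of bounded differences, originally due to Azuma and Hoeffding. First, I would introduce the Doob martingale associated with $f$: set $Z_0 = \mu$ and $Z_t = \E{f(X_1,\dots,X_T) \mid X_1,\dots,X_t}$ for $t = 1,\dots,T$, so that $Z_T = f(X_1,\dots,X_T)$. The martingale differences $D_t = Z_t - Z_{t-1}$ telescope into $f(X_1,\dots,X_T) - \mu = \sum_{t=1}^T D_t$, and by construction satisfy $\E{D_t \mid X_1,\dots,X_{t-1}} = 0$.

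The key technical step is to control the conditional range of each $D_t$. Because $f$ is $K$-Lipschitz in the Hamming metric, once $X_1,\dots,X_{t-1}$ are fixed, the quantity $\E{f \mid X_1,\dots,X_{t-1}, X_t = x_t}$ takes at most two values as $x_t$ ranges over $\{0,1\}$, and these two values differ by at most $K$. Hence $D_t$ is conditionally supported in an interval of length at most $K$. Applying Hoeffding's lemma to this mean-zero, bounded random variable yields the sub-Gaussian MGF estimate $\E{e^{\lambda D_t} \mid X_1,\dots,X_{t-1}} \le e^{\lambda^2 K^2 / 8}$ for every $\lambda > 0$.

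Chaining the conditional MGF bounds via the tower property, I obtain $\E{e^{\lambda (f - \mu)}} \le e^{\lambda^2 T K^2 / 8}$. Markov's inequality then gives
$$\Prob{f - \mu \ge \varepsilon} \;\le\; e^{-\lambda \varepsilon}\, \E{e^{\lambda(f - \mu)}} \;\le\; \exp\brc{-\lambda \varepsilon + \lambda^2 T K^2 / 8},$$
and optimizing in $\lambda$ (taking $\lambda = 4\varepsilon/(TK^2)$) produces exactly the stated tail bound $\exp(-2\varepsilon^2/(TK^2))$.

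The only nontrivial ingredient is Hoeffding's lemma itself, which I would establish separately by a convexity argument on $x \mapsto e^{\lambda x}$ restricted to the supporting interval; everything else is routine Chernoff bookkeeping and the tower property. Since this is a classical, widely-cited tool (as indicated by the paper's attribution to McDiarmid), in practice I would simply invoke it as a black box rather than reproducing the proof in the body of the paper.
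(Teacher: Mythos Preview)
Your sketch is a correct outline of the standard Azuma--Hoeffding proof of McDiarmid's inequality, and you correctly anticipate at the end that one would cite it as a black box. That is exactly what the paper does: this theorem is stated with attribution to McDiarmid~\cite{MDiar} and is not proved in the paper at all, so there is no paper proof to compare against.
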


\begin{lemma}\label{lem:conc-value}
  For every injective map $\varphi: V_\tG \to V_\tH$,
  \[ \Prob{\VALUEQAP (\Gamma, \varphi) - \E{\VALUEQAP (\Gamma, \varphi)}
    \geq \alpha N^2} \leq e^{-n^2Nk}. \]
\end{lemma}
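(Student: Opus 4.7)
The plan is to apply McDiarmid's inequality (Theorem~\ref{thm:mcdiarmid}) to $f = \VALUEQAP(\Gamma,\varphi)$ viewed as a function of the independent Bernoulli variables produced by the reduction. The random bits are exactly the indicators $X_{uv,ij} = \ONE((i,j)\in\calE_{uv})$, one for each edge $(u,v)\in E_G$ and each pair $(i,j)\in[N]\times[N]$; thus $T = |E_G|\,N^2$ and each $X_{uv,ij}$ is an independent $\{0,1\}$ variable (with bias $\alpha$, but McDiarmid does not need balanced bits).

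The main step is to bound the Lipschitz constant $K$. I would rewrite
\[
\VALUEQAP(\Gamma,\varphi) \;=\; \sum_{(u,v)\in E_G}\sum_{(i,j)\in[N]^2} X_{uv,ij}\cdot\ONE\bigl((\varphi(u,i),\varphi(v,j))\in E_\tH\bigr),
\]
and ask how flipping a single coordinate $X_{u_0v_0,i_0j_0}$ changes the sum. Such a flip has two effects on the instance: it may toggle the single edge $((u_0,i_0),(v_0,j_0))$ in $E_\tG$, and, for each $(x,y)\in\pi_{u_0v_0}$, it may toggle the edge $((u_0,x,i_0),(v_0,y,j_0))$ in $E_\tH$. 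The first effect changes $\VALUEQAP$ by at most $1$. For the second, since $\varphi$ is injective each potential $\tH$-edge $((u_0,x,i_0),(v_0,y,j_0))$ is the image under $\varphi$ of at most one pair of vertices in $V_\tG$, so the flip can change at most $|\pi_{u_0v_0}|\le k^2$ terms in the sum. Hence $|f(x)-f(y)|\le (1+k^2)\|x-y\|_1$, so $K\le 1+k^2$.

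Plugging $T=|E_G|N^2$, $K\le 1+k^2$, and $\varepsilon=\alpha N^2 = N^2/n$ into Theorem~\ref{thm:mcdiarmid} gives
\[
\Prob{\VALUEQAP(\Gamma,\varphi)-\E{\VALUEQAP(\Gamma,\varphi)}\ge \alpha N^2}
\;\le\; \exp\!\left(-\frac{2(N^2/n)^2}{|E_G|N^2(1+k^2)^2}\right)
\;=\; \exp\!\left(-\frac{2N}{n^2 |E_G|(1+k^2)^2}\right).
\]
The last step is to verify that the choice $N = \lceil n^4|E_G|k^5\rceil$ was calibrated precisely so that the exponent dominates $n^2Nk$: indeed, $2N/(n^2|E_G|(1+k^2)^2)\ge n^2Nk$ becomes $2N\ge n^4|E_G|(1+k^2)^2 k$, which holds for our $N$ (for $k\ge 1$). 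I expect the only real subtlety to be the Lipschitz calculation—specifically, using injectivity of $\varphi$ to argue that each flipped $\tH$-edge is hit by at most one $\tG$-edge in the sum—so that $K$ is independent of $N$ and only polynomial in $k$; the rest is bookkeeping.
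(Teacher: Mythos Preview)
Your approach is exactly the paper's: write $\VALUEQAP(\Gamma,\varphi)$ as a function of the independent bits $X_{uv,ij}$, show the Lipschitz constant is at most $k^2+1$ using injectivity of $\varphi$, and plug $T=|E_G|N^2$, $K=k^2+1$, $\varepsilon=\alpha N^2$ into McDiarmid. One arithmetic slip to fix: $\dfrac{2(N^2/n)^2}{|E_G|N^2(1+k^2)^2}=\dfrac{2N^2}{n^2|E_G|(1+k^2)^2}$, not $\dfrac{2N}{\cdots}$; with the correct $N^2$, the comparison to $n^2Nk$ reduces (after cancelling one $N$) to $2N\ge n^4|E_G|(1+k^2)^2k$, which is what you wrote and which holds for $N=\lceil n^4|E_G|k^5\rceil$ once $k\ge 2$.
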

\begin{proof}

  The presence of edges in the random graphs $\tG$ and $\tH$ is
  determined by the random sets $\calE_{uv}$ (where $(u,v)\in E_G$).
  Thus, we can think of the random variable $\VALUEQAP (\Gamma,
  \varphi)$ as of function of the indicator variables $X_{uivj}$,
  where $X_{uivj}$ equals 1, if $(i,j)\in \calE_{uv}$; and 0,
  otherwise. To be precise, $\VALUEQAP (\Gamma, \varphi)$ equals
$$\sum_{\substack{(u,v) \in E_G\\i,j\in [N]}}
X_{uivj} X_{\varphi_V(u,i)\varphi_{[N]}(u,i)\varphi_V(v,j)\varphi_{[N]}(v,j)} \ONE
((\varphi_{label}(u,i),\varphi_{label}(v,j))\in
\pi_{\varphi_V(u,i)\varphi_V(v,j)}).
$$
Observe, that variables $X_{uivj}$ are mutually independent (we
identify $X_{uivj}$ with $X_{vjui}$). Each $X_{uivj} = 1$ with
probability $\alpha$. Finally, $\VALUEQAP (\Gamma, \varphi)$ is $(k^2
+1)$-Lipschitz as a function of the variables $X_{uivj}$. That is, if
we change one of the variables $X_{uivj}$ from 0 to 1, or from 1 to 0,
then the value of the function may change by at most $k^2+1$. This
follows from the expression above, since for every fixed $\varphi$,
each $X_{uivj}$ may appear in at most $k^2+1$ terms (reason: there is
one term $X_{uivj}
X_{\varphi_V(u,i)\varphi_{[N]}(u,i)\varphi_V(v,j)\varphi_{[N]}(v,j)}$
and at most $k^2$ terms $X_{u'i'v'j'}
X_{\varphi_V(u',i')\varphi_{[N]}(u',i')\varphi_V(v',j')\varphi_{[N]}(v',j')}$,
such that $\varphi (u',i') = (u,x,i)$ and $\varphi (v',j') = (v,y,j)$
for some $x,y\in [k]$, since $\varphi$ is an injective map).
McDiarmid's inequality with $T = N^2\cdot |E_G|$, $K = (k^2+1)$, and
$\varepsilon = \alpha N^2$, implies the statement of the lemma.
\end{proof}

\begin{corollary}[Soundness]\label{cor:soundness} With high
  probability, the reduction outputs an instance $\Gamma$ such that
  \[ \OPTQAP(\Gamma) \le \alpha |E_G| N^2 \times (\OPTLC(\Upsilon) + 2\alpha)\]
\begin{remark}
It is instructive to think, that $2\alpha \ll \OPTLC(\Upsilon)$.
\end{remark}
\end{corollary}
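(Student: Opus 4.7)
The plan is to derive the corollary by combining Lemma~\ref{lem:bound-expectation} and Lemma~\ref{lem:conc-value} via a union bound over all injective maps $\varphi: V_\tG \to V_\tH$. For a fixed injective $\varphi$, the two lemmas together imply that, except with probability $e^{-n^2 N k}$,
\[
\VALUEQAP(\Gamma, \varphi) \le \alpha |E_G| N^2 (\OPTLC(\Upsilon) + \alpha) + \alpha N^2.
\]

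Next I would take a union bound over all candidate injective maps. Their number is at most $|V_\tH|^{|V_\tG|} = (nkN)^{nN}$, whose logarithm is $O\bigl(nN \log(nkN)\bigr)$. For the parameter choice $N = \lceil n^4 |E_G| k^5 \rceil$, the concentration exponent $n^2 N k$ from Lemma~\ref{lem:conc-value} exceeds $nN\log(nkN)$ by a polynomial factor in $n$ and $k$. Hence, with high probability the bound above holds for every injective $\varphi$ simultaneously, and therefore
\[
\OPTQAP(\Gamma) = \max_{\varphi} \VALUEQAP(\Gamma, \varphi) \le \alpha |E_G| N^2 (\OPTLC(\Upsilon) + \alpha) + \alpha N^2.
\]

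Finally, to match the stated form $\alpha |E_G| N^2 (\OPTLC(\Upsilon) + 2\alpha)$, I would absorb the additive $\alpha N^2$ into $\alpha^2 |E_G| N^2$. This requires $1 \le \alpha |E_G| = |E_G|/n$, i.e.\ $|E_G| \ge n$, which is a mild assumption satisfied by the label cover instances produced by the PCP reduction underlying Theorem~\ref{thm:lc-hardness} (and can always be enforced by duplicating edges if necessary).

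The main obstacle is purely parameter balancing: the McDiarmid concentration bound must be strong enough to survive the union bound over exponentially many injective maps. This is precisely why $N$ was chosen as large as $\lceil n^4 |E_G| k^5 \rceil$ and $\alpha = 1/n$: so that the exponent $n^2 N k$ comfortably dominates $nN \log(nkN)$ while the additive deviation $\alpha N^2$ is still small compared to $\alpha^2 |E_G| N^2$. Once this slack is in place, the rest is routine arithmetic.
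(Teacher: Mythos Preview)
Your proposal is correct and follows essentially the same argument as the paper: combine Lemma~\ref{lem:bound-expectation} with Lemma~\ref{lem:conc-value}, union-bound over the at most $(nkN)^{nN}$ injective maps, and then absorb the additive $\alpha N^2$ into $\alpha^2 |E_G| N^2$ to obtain the stated form. The paper's proof is terser (it simply states the union bound succeeds and stops at the three-term inequality), whereas you additionally spell out the parameter check $nN\log(nkN) \ll n^2Nk$ and the assumption $|E_G|\ge n$ needed for the final absorption, both of which the paper leaves implicit.
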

\begin{proof}
  The total number of maps from $V_G$ to $V_H$ is $(nNk)^{nN}$.  Thus,
  by the union bound, with probability $1 - o(1)$, for
  every injective mapping $\varphi: V_G \to V_H$:
$$\VALUEQAP (\Gamma, \varphi) - \E{\VALUEQAP (\Gamma, \varphi)}
\leq \alpha N^2.$$ Plugging in the bound for the expected value from
\prettyref{lem:bound-expectation} gives
\[ \OPTQAP(\Gamma) \le \alpha |E_G| N^2 \OPTLC(\Upsilon) + \alpha^2 |E_G|
N^2 + \alpha N^2. \]
\end{proof}

 \begin{theorem}
   For every positive $\varepsilon > 0$, there is no polynomial time
   approximation algorithm for the Maximum Quadratic Assignment problem
   with the approximation factor less than $D = 2^{\log^{1-\varepsilon}
   n}$ (where $n$ is the number of vertices in
   the graph) unless $\NP \subset \BPQP$.
 \end{theorem}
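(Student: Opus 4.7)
The plan is to chain Theorem~\ref{thm:lc-hardness} (hardness of Label Cover) with the completeness guarantee (Lemma~\ref{lem:completeness} combined with Claim~\ref{claim:conc-total}) and the soundness guarantee (Corollary~\ref{cor:soundness}) of the randomized reduction. Fix any $\varepsilon' \in (0, \varepsilon)$; the role of $\varepsilon'$ is to absorb the polynomial blow-up introduced by the reduction.

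Given a Label Cover instance $\Upsilon$ on $n_0$ vertices, apply the reduction with the prescribed parameters $N = \lceil n_0^4 |E_G| k^5\rceil$ and $\alpha = 1/n_0$, producing a \MAXQAP instance $\Gamma = (\tG, \tH)$ on $n = \Theta(n_0 k N)$ vertices. Since $k$ and $N$ are polynomial in $n_0$, $\log n = \Theta(\log n_0)$. In the YES case ($\OPTLC(\Upsilon) = 1$), Lemma~\ref{lem:completeness} together with Claim~\ref{claim:conc-total} yields $\OPTQAP(\Gamma) \ge \alpha |E_G| N^2/2$ with high probability. In the NO case, when $\OPTLC(\Upsilon) \le \delta := 2^{-\log^{1-\varepsilon'} n_0}$ (as supplied by Theorem~\ref{thm:lc-hardness}), Corollary~\ref{cor:soundness} gives $\OPTQAP(\Gamma) \le \alpha |E_G| N^2 (\delta + 2\alpha)$ with high probability. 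For $n_0$ large enough, $\log^{1-\varepsilon'} n_0 \le \log n_0 - 1$, so $\delta \ge 2\alpha$ and the ratio between the YES and NO values is at least $\frac{1}{4\delta} = \frac{1}{4}\cdot 2^{\log^{1-\varepsilon'} n_0}$. A union bound absorbs the ``with high probability'' qualifiers.

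The final step is a bookkeeping check that this ratio dominates $D = 2^{\log^{1-\varepsilon} n}$. Because $n$ is polynomial in $n_0$, there is a constant $c$ with $\log n \le c \log n_0$, hence $\log^{1-\varepsilon} n \le c^{1-\varepsilon} \log^{1-\varepsilon} n_0$. Since $1-\varepsilon' > 1-\varepsilon$, the quantity $\log^{1-\varepsilon'} n_0$ grows strictly faster than $\log^{1-\varepsilon} n$, so $\tfrac14 \cdot 2^{\log^{1-\varepsilon'} n_0} \ge 2^{\log^{1-\varepsilon} n}$ for sufficiently large $n_0$. Composing the randomized polynomial-time reduction with an assumed polynomial-time algorithm achieving approximation ratio better than $D$ for \MAXQAP would then distinguish satisfiable Label Cover instances from instances with optimum at most $\delta$ in randomized polynomial time, which contradicts the randomized analogue of Theorem~\ref{thm:lc-hardness} unless $\NP \subset \BPQP$. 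The only delicate point is the calibration of $\varepsilon'$ to swallow the polynomial blow-up of the reduction; everything else is assembly of the pieces already established.
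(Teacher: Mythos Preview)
Your proof is correct and follows essentially the same route as the paper: assume a $D$-approximation algorithm, compose it with the randomized reduction, and use completeness (Lemma~\ref{lem:completeness}), soundness (Corollary~\ref{cor:soundness}), and Claim~\ref{claim:conc-total} to distinguish satisfiable Label Cover instances from highly unsatisfiable ones, contradicting Theorem~\ref{thm:lc-hardness}. The paper's write-up is terser---it sets the Label Cover threshold directly to $1/(4D)$ and leaves the polynomial blow-up implicit---whereas you make the calibration via an auxiliary $\varepsilon' < \varepsilon$ explicit; this is the same argument, just with the bookkeeping spelled out.
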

 \begin{proof}
   Assume to the contrary that there exists a polynomial time algorithm
   $A$ with the approximation factor less than $D = 2^{\log^{1-\varepsilon}
     n}$ for some positive $\varepsilon$. We use this
   algorithm to distinguish satisfiable instances of the label
   cover from at most $1/(4D)$--\,satisfiable instances in randomized polynomial
   time, which is not possible (if $\NP \not\subset \BPQP$) according to
   \prettyref{thm:lc-hardness}.

   Let $\Upsilon$ be an instance of the label cover. Using the reduction described above
   transform $\Upsilon$ to an instance of \MAXQAP $\Gamma$. Run the
   algorithm $A$ on $\Gamma$. \emph{Accept} $\Upsilon$, if the value $A(\Gamma)$ returned by the algorithm is at least $|E_{\tG}|/D$. \emph{Reject} $\Upsilon$, otherwise. By
   \prettyref{lem:completeness}, if $\Upsilon$ is satisfiable, then $\OPTQAP(\Gamma) = |E_{\tG}|$ and, hence
   $A(\Gamma) \geq |E_{\tG}|/D$. Thus we always accept satisfiable
   instances. On the other hand, if the instance $\Upsilon$ is
at most $1/(4D)$--\,satisfiable, then, by \prettyref{cor:soundness},
with high probability
$$\OPTQAP(\Gamma) \le \alpha |E_G| N^2 (\OPTLC(\Upsilon) + 2\alpha)
< |E_{\tG}|/D,$$
the second inequality follows from $|E_{\tG}| \geq \alpha |E_G| N^2 / 2$ (see \prettyref{claim:conc-total}). Therefore, with high probability, we reject $\Upsilon$.
\end{proof}

\section{LP Relaxation and Approximation Algorithm}
We now present a new $O(\sqrt{n})$ approximation algorithm slightly improving
on the result of Nagarajan and Sviridenko~\cite{NS}. The new algorithm
is surprisingly simple. It is based on a rounding of a natural LP relaxation. The LP relaxation
is due to  Adams and Johnson~\cite{AJ}. Thus we show that the integrality gap of the LP is
$O(\sqrt{n})$.

Consider the following integer program. We have assignment
variables $x_{up}$ between vertices of the two graphs that
are indicator variables
of the events ``\emph{$u$ maps to $p$}'', and variables $y_{upvq}$ that
are indicator variables of the events ``\emph{$u$ maps to $p$ and
$v$ maps to $q$}''. The LP relaxation is obtained by dropping
the integrality condition on variables.

\pagebreak

\rule{0pt}{12pt}
\hrule height 0.8pt
\rule{0pt}{1pt}
\hrule height 0.4pt
\rule{0pt}{6pt}

\noindent \textbf{LP Relaxation}

$$\begin{array}{lll}
\max & \displaystyle\sum_{\substack{u,v \in V_G\\p,q\in V_H}}  w_G(u,v) w_H(p,q) y_{upvq}&\\
&\sum_{p\in V_H} x_{up} = 1,& \text{for all }  u\in V_G;\\
&\sum_{u\in V_G} x_{up} = 1,& \text{for all }   p\in V_H;\\
&\sum_{u \in V_G} y_{upvq} = x_{vq}, & \text{for all }  v\in V_G, \;p,q\in V_H;\\
& \sum_{p\in V_H} y_{upvq} = x_{vq}, &\text{for all }  u,v\in V_G, \;q\in V_H;\\
&y_{upvq} = y_{vqup}, &\text{for all } u,v\in V_G,\; p,q\in V_H;\\
&x_{up}\in [0,1], &\text{for all }  u\in V_G, \; p\in V_H;\\
&y_{upvq}\in[0,1], &\text{for all } u\in V_G, \; p\in V_H.\\
\end{array}
$$
\rule{0pt}{1pt}
\hrule height 0.4pt
\rule{0pt}{1pt}
\hrule height 0.8pt
\rule{0pt}{12pt}


\rule{0pt}{12pt}
\hrule height 0.8pt
\rule{0pt}{1pt}
\hrule height 0.4pt
\rule{0pt}{6pt}

\noindent \textbf{Approximation Algorithm}
\begin{enumerate}
\item We  solve the LP relaxation and obtain an optimal solution
$(x^*,y^*)$. Then we pick random subsets of vertices $L_G\subset V_G$
and $L_H\subset V_H$ of size $\rounddown{n/2}$. Let $R_G = V_G\setminus L_G$
and $R_H = V_H\setminus L_H$. In the rest of the algorithm,
we will care only about edges going from $L_G$ to $R_G$ and from $L_H$ to $R_H$;
and we will ignore edges that completely lie in $L_G$, $R_G$, $L_H$ or $R_H$.
\item For every vertex $u$ in the set $L_G$, we pick a vertex $p$ in $L_H$ with probability $x^*_{up}$ and set $\widetilde{\varphi}(u) = p$ (recall that
$\sum_{p} x^*_{up} = 1$, for all $u$; with probability $1 - \sum_{p\in L_H} x^*_{up}$ we
do not choose any vertex for $u$). Then for every vertex $p\in L_H$, which
is chosen by at least one element $u$, we pick one of these
$u$'s uniformly at random; and set $\varphi (u) =  p$ (in other words,
we choose a random $u\in \widetilde{\varphi}^{-1} (p)$ and set $\varphi(u) = p$). Let
$\widetilde{L}_G\subset L_G$ be the set of all chosen $u$'s.
\item We now find a bijection $\psi: R_G \to R_H$ so as to maximize the
contribution we get from edges from $\widetilde{L}_G$ to $R_G$ i.e., to maximize the sum
\begin{equation}\label{eq:def-psi}
\sum_{\substack{u\in \widetilde{L}_G\\ v\in R_G}} w_G(u,v) w_H(\varphi(u), \psi(v)).
\end{equation}
This can be done, since the problem is equivalent to the maximum matching problem between the sets $R_G$ and $R_H$ where the weight of the edge from $v$ to $q$ equals
$$\sum_{u\in \widetilde{L}_G} w_G(u,v) w_H(\varphi(u), q).$$
\item Output the union of the maps $\varphi$, $\psi$ and an arbitrary
  bijection from $L_G \setminus \widetilde{L}_G$ to $L_H \setminus
  \varphi(\widetilde{L}_G)$.
\end{enumerate}
\rule{0pt}{1pt}
\hrule height 0.4pt
\rule{0pt}{1pt}
\hrule height 0.8pt
\rule{0pt}{12pt}

\subsection{Analysis of the Algorithm}
\begin{theorem}\label{thm:analys}
The approximation ratio of the algorithm is $O(\sqrt{n})$.
\end{theorem}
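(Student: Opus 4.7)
The plan is to lower-bound the expected value of the algorithm's output, $\mathrm{ALG}$, through the three stages---random partition, randomized rounding of $\varphi$, optimal matching $\psi$---and then to convert this bound to $\Omega(\mathrm{LP}/\sqrt n)$ by a case analysis on the LP variables.

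First, by the symmetry of uniformly sampled halves of size $\lfloor n/2\rfloor$, each quadruple $(u,v,p,q)$ lies in the crossing configuration $u\in L_G$, $v\in R_G$, $p\in L_H$, $q\in R_H$ with probability $\Theta(1)$, so the expected LP contribution from the crossing quadruples is $\Omega(\mathrm{LP})$. Second, for the rounding step, $\Pr[\widetilde{\varphi}(u)=p\mid u\in L_G,\,p\in L_H]=x^*_{up}$, and conditional on $\widetilde{\varphi}(u)=p$, the probability that $u$ wins the uniform tie-break is $\mathbb{E}[1/(1+N'_p)]\ge 1/(1+\mathbb{E}[N'_p])\ge 1/2$ by Jensen's inequality, where $N'_p=|\{u'\in L_G\setminus\{u\}:\widetilde{\varphi}(u')=p\}|$ and $\mathbb{E}[N'_p]\le\sum_{u'\ne u}x^*_{u'p}\le 1$. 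Hence $\Pr[\varphi(u)=p\mid u\in L_G,\,p\in L_H]\ge x^*_{up}/2$.

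Third, $\psi$ is a maximum-weight bipartite matching on $R_G\times R_H$ with edge weights $\alpha_{vq}=\sum_{u\in\widetilde{L}_G}w_G(u,v)w_H(\varphi(u),q)$, so by LP duality for bipartite matching $\sum_v\alpha_{v,\psi(v)}\ge\sum_{v,q}z_{vq}\alpha_{vq}$ for every feasible fractional matching $z$ on $R_G\times R_H$. Plugging in $z_{vq}=x^*_{vq}$---feasible since $\sum_q x^*_{vq}=\sum_v x^*_{vq}=1$---and combining with the rounding bound gives, after taking expectation over all the randomness,
\[
\mathbb{E}[\mathrm{ALG}]\;\ge\;c\sum_{u,v,p,q}w_G(u,v)w_H(p,q)\,x^*_{up}x^*_{vq}
\]
for a universal constant $c>0$.

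The main obstacle is showing $\sum w_G(u,v)w_H(p,q)\,x^*_{up}x^*_{vq}\ge\mathrm{LP}/O(\sqrt n)$: the product measure $x^*_{up}x^*_{vq}$ and the LP coupling $y^*_{upvq}$ share the same $x^*$-marginals but can differ by a factor of $n$ on a single quadruple. To bridge this I split quadruples by whether $\max(x^*_{up},x^*_{vq})\ge 1/\sqrt n$. In the \emph{large} regime, the LP constraints give $y^*_{upvq}\le\min(x^*_{up},x^*_{vq})=x^*_{up}x^*_{vq}/\max(x^*_{up},x^*_{vq})\le\sqrt n\cdot x^*_{up}x^*_{vq}$ pointwise. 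In the \emph{small} regime (both $x^*_{up},x^*_{vq}<1/\sqrt n$, hence also $y^*_{upvq}<1/\sqrt n$) the pointwise bound fails, and I expect to close the gap either by a Cauchy--Schwarz step leveraging $y^*_{upvq}\le\sqrt{x^*_{up}x^*_{vq}}$ together with $\sum_{p,q}y^*_{upvq}=1$, or by augmenting the proof with an alternative fractional matching $z_{vq}=|\widetilde{L}_G|^{-1}\sum_{u\in\widetilde{L}_G}y^*_{u,\varphi(u),v,q}/x^*_{u,\varphi(u)}$ (feasible because $\sum_v y^*_{upvq}=x^*_{up}$ and $\sum_q y^*_{upvq}=x^*_{up}$) that gives a second lower bound on $\mathrm{ALG}$ directly in terms of the LP objective. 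Interpolating between the two regimes and taking the better of the two bounds should yield the claimed $O(\sqrt n)$ ratio; this interpolation is the step I expect to be technically delicate.
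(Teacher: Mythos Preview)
Your first two steps are fine, and your ``large regime'' argument (where $\max(x^*_{up},x^*_{vq})\ge 1/\sqrt n$) is correct. The genuine gap is the ``small regime''. Neither of your proposed fixes closes it. The Cauchy--Schwarz idea $y^*\le\sqrt{x^*_{up}x^*_{vq}}$ only relates $\sum y^*w_Gw_H$ to $\sum\sqrt{x^*_{up}x^*_{vq}}\,w_Gw_H$, and since $\sum_p\sqrt{x^*_{up}}$ can be $\sqrt n$ while $\sum_p x^*_{up}=1$, you lose a factor of $n$, not $\sqrt n$. Your averaged alternative matching $z_{vq}=|\widetilde L_G|^{-1}\sum_{u\in\widetilde L_G} y^*_{u\varphi(u)vq}/x^*_{u\varphi(u)}$ is feasible, but the useful (diagonal $u=u'$) part of $\sum z_{vq}\alpha_{vq}$ carries the prefactor $1/|\widetilde L_G|=\Theta(1/n)$, so in expectation it yields only $\Theta(LP^*/n)$. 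Concretely, take $x^*_{up}\equiv 1/n$ with $y^*$ concentrated on a permutation; then \emph{every} quadruple is in your small regime, the product bound gives $LP^*/n$, and your averaged $z$ also gives $LP^*/n$.

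What the paper does instead is to split not by the size of $x^*$ but by edge weights in $G$: let $\calW_u$ be the $\lceil\sqrt n\rceil$ heaviest $G$-neighbours of $u$, and write $LP^*=LP^*_I+LP^*_{II}$ accordingly. For the light part $LP^*_I$ (pairs with $v\notin\calW_u$) the product measure suffices, because each light weight $w_G(u,v)$ is at most $\min_{v'\in\calW_u}w_G(u,v')$, while the product-measure bound picks up $\sum_{v'\in\calW_u}w_G(u,v')/n\ge\min_{v'\in\calW_u}w_G(u,v')/\sqrt n$. For the heavy part $LP^*_{II}$ the paper avoids your $1/|\widetilde L_G|$ averaging by a \emph{star decomposition}: set $l(v)=\arg\max_u\sum_{p,q}w_Gw_H\,y^*_{upvq}$, group the $v$'s into disjoint stars $\calR_u=\{v:l(v)=u\}$, and for each star use the \emph{unaveraged} conditional matching $z_{vq}=y^*_{u\widetilde\varphi(u)vq}/x^*_{u\widetilde\varphi(u)}$ (decomposed into integral matchings per star). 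Because each $v$ belongs to exactly one star, feasibility holds without the $1/|\widetilde L_G|$ normalization, and the $\sqrt n$ loss enters only at the last step via $\max_u\ge\frac{1}{|\calW_v|}\sum_{u\in\calW_v}$. This star idea is precisely the missing ingredient in your small-regime case.
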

While the algorithm is really simple, the analysis is more involved.
Let $LP^*$ be the value of the LP solution. To prove that the algorithm
gives $O(\sqrt{n})$-approximation, it suffices to show that
\begin{equation}\label{eq:SuffToShow}
\E{\sum_{\substack{u\in L_G\\v\in R_G}} w_G(u,v) w_H (\varphi(u),\psi (v))}
\geq \frac{LP^*}{O(\sqrt{n})}.
\end{equation}

We split all edges of graph $G$ into two sets: heavy edges and light edges. For each vertex $u \in V_G$, let $\calW_u$ be the set of $\roundup{\sqrt{n}}$ vertices $v \in V_G$ with the largest weight $w_G(u,v)$.  Then,
$$LP^* = \sum_{\substack{u\in V_G\\v\in V_G \setminus \calW_u}}
\sum_{p,q\in V_H} y^*_{upvq} w_G(u,v) w_H(p,q) +
\sum_{\substack{u\in V_G\\v\in\calW_u}}
\sum_{p,q\in V_H} y^*_{upvq} w_G(u,v) w_H(p,q).$$
Denote the first term by $LP_I^*$ and the second by $LP_{II}^*$. Instead of
working with $\psi$, we explicitly define two new bijective maps $\nu_{I}$
and $\nu_{II}$ from $R_G$ to $R_H$ and prove, that
$$\E{\sum_{\substack{u\in \widetilde{L}_G\\v\in R_G}} w_G(u,v) w_H (\varphi(u),\nu_{I} (v))}
\geq {\frac{LP_I^*}{O(\sqrt{n})}};\text{ and }\;
\E{\sum_{\substack{u\in \widetilde{L}_G\\v\in R_G}} w_G(u,v) w_H (\varphi(u),\nu_{II} (v))}
\geq
{\frac{LP_{II}^*}{O(\sqrt{n})}}.$$
These two inequalities imply the bound we need, since the sum~\prettyref{eq:SuffToShow}
is greater than or equal to each of the sums above (by the choice of $\psi$; see~(\ref{eq:def-psi})).
Before we proceed, we state two simple lemmas we need later (see the appendix for the proofs).

\begin{lemma}\label{fact1}
Let $S$ be a random subset of a set $V$. Suppose that for $u\in V$,
all events $\{u'\in S\}$ where $u'\neq u$ are jointly independent of
the event $\{u\in S\}$. Let $s$ be an element of $S$ chosen uniformly at random (if $S=\varnothing$, then $s$ is not defined). Then $\Prob{u=s} \geq  \Prob{u\in S}/(\E{|S|} + 1)$.
\end{lemma}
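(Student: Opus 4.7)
The plan is to condition on whether $u \in S$ and then exploit the independence hypothesis to compute the conditional probability that the uniform random element $s$ equals $u$. First I would write
\[
\Prob{u = s} = \Prob{u \in S} \cdot \Exp\left[\frac{1}{|S|} \;\Big|\; u \in S\right],
\]
using that conditioned on $u \in S$, the element $s$ is uniform on $S$ (which contains $u$), so the conditional probability of picking $u$ is $1/|S|$.

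Next I would use the independence assumption to rewrite the conditional expectation. Write $S' = S \setminus \{u\}$. Under the hypothesis that all events $\{u' \in S\}$ for $u' \neq u$ are jointly independent of $\{u \in S\}$, the distribution of $S'$ is the same conditionally on $\{u \in S\}$ as unconditionally. Moreover, conditional on $u \in S$, we have $|S| = 1 + |S'|$. Therefore
\[
\Exp\left[\frac{1}{|S|} \;\Big|\; u \in S\right] = \Exp\left[\frac{1}{1 + |S'|}\right],
\]
where the right-hand expectation is unconditional.

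The key step is then to apply Jensen's inequality to the convex function $t \mapsto 1/(1+t)$ on $[0,\infty)$:
\[
\Exp\left[\frac{1}{1 + |S'|}\right] \;\geq\; \frac{1}{1 + \Exp[|S'|]} \;\geq\; \frac{1}{1 + \Exp[|S|]},
\]
where the last inequality uses $|S'| \leq |S|$ pointwise. Combining the three displays yields $\Prob{u = s} \geq \Prob{u \in S}/(\Exp[|S|] + 1)$, as claimed.

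I do not anticipate a serious obstacle here; the only subtlety worth highlighting in the write-up is the careful use of the independence hypothesis to pass from the conditional distribution of $|S|$ (given $u \in S$) to the unconditional distribution of $1 + |S'|$. Without that independence, one would only get $\Exp[1/|S| \mid u \in S]$, which need not relate to $\Exp[|S|]$ in the required direction.
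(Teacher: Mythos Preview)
Your proof is correct and follows essentially the same approach as the paper's: both start from $\Prob{u=s} = \Prob{u\in S}\cdot\E{1/|S|\mid u\in S}$, use Jensen's inequality for $t\mapsto 1/t$ (or equivalently $t\mapsto 1/(1+t)$), and invoke the independence hypothesis to replace the conditional size by $\E{|S\setminus\{u\}|}+1\le \E{|S|}+1$. The only cosmetic difference is the order in which you apply Jensen and the independence step.
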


\begin{lemma}\label{fact2}
Let $S$ be a random subset of a set $L$, and $T$ be a random subset of a set $R$.
Suppose that for $(l,r)\in L\times R$, all events $\{l'\in S\}$ where $l'\neq l$ and all events $\{r'\in T\}$ where $r'\neq r$ are jointly independent of the event $\{(l,r) \in S\times T\}$.
Let $s$ be an element of $S$ chosen uniformly at random, and let $t$ be an element of $T$ chosen uniformly at random.
Then, $$\Prob{(l,r) = (s,t)} \geq \frac{\Prob{(l,r)\in S\times T}}{(\E{|S|} + 1)\times (\E{|T|} + 1)}$$
(here $(s,t)$ is not defined if $S=\varnothing$ or $T=\varnothing$).
\end{lemma}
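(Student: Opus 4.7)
The plan is to mirror the proof strategy of Lemma \ref{fact1}, now using the conditional Jensen inequality applied to the convex function $x \mapsto 1/x$ together with a factorization of $\E{|S||T| \mid l \in S, r \in T}$ into a product of the form $(1+A)(1+B)$.

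First, since conditional on $S$ and $T$ the uniformly random picks $s \in S$ and $t \in T$ can be taken independently, and since the conditioning event $\{l \in S, r \in T\}$ forces both $s$ and $t$ to be defined,
$$\Prob{(s,t) = (l,r)} = \E{\frac{\ONE\{l \in S\}\,\ONE\{r \in T\}}{|S|\cdot|T|}} = \Prob{l \in S, r \in T}\cdot \E{\frac{1}{|S||T|} \mid l \in S, r \in T}.$$
Applying the conditional Jensen inequality to the convex map $x \mapsto 1/x$ on $(0,\infty)$ yields
$$\E{\frac{1}{|S||T|} \mid l \in S, r \in T} \geq \frac{1}{\E{|S||T| \mid l \in S, r \in T}}.$$

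The crux is then to show $\E{|S||T| \mid l \in S, r \in T} \leq (\E{|S|}+1)(\E{|T|}+1)$. I would expand $|S||T| = \sum_{l' \in L}\sum_{r' \in R}\ONE\{l' \in S\}\,\ONE\{r' \in T\}$ and split the double sum into four cases by whether $l' = l$ and whether $r' = r$. The hypothesis---that the family of events $\{l' \in S\}_{l' \neq l} \cup \{r' \in T\}_{r' \neq r}$ is jointly (mutually) independent and also jointly independent of $\{(l,r) \in S \times T\}$---ensures that conditioning on $\{l \in S, r \in T\}$ preserves the marginal probabilities of these indicators and moreover gives $\Prob{l' \in S, r' \in T \mid l \in S, r \in T} = \Prob{l' \in S}\,\Prob{r' \in T}$ for $l' \neq l$ and $r' \neq r$. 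Collecting the four contributions,
$$\E{|S||T| \mid l \in S, r \in T} = \left(1 + \sum_{l' \neq l}\Prob{l' \in S}\right)\left(1 + \sum_{r' \neq r}\Prob{r' \in T}\right) \leq (\E{|S|}+1)(\E{|T|}+1).$$
Combining this bound with the Jensen inequality and the initial identity proves the lemma.

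The main obstacle is the factorization step: one must extract from the ``jointly independent'' hypothesis both the internal mutual independence of the indicator family and its independence from the conditioning event, so that the mixed probabilities factor as $\Prob{l' \in S}\,\Prob{r' \in T}$. Once this is justified, the four-case sum collapses algebraically into the product $(1+A)(1+B)$, which is precisely the denominator appearing in the conclusion; the rest is the standard one-step Jensen argument already used in Lemma \ref{fact1}.
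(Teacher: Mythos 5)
Your overall skeleton (condition on the event, apply Jensen, bound a conditional expectation involving $|S|\cdot|T|$) matches the paper's, but the key factorization step fails. The hypothesis of the lemma states only that the \emph{family} of events $\{l'\in S\}_{l'\neq l}\cup\{r'\in T\}_{r'\neq r}$ is jointly independent of the single event $\{(l,r)\in S\times T\}$; it does \emph{not} assert any mutual independence among the members of that family. Conditioning on $\{(l,r)\in S\times T\}$ therefore preserves the \emph{joint} law of the family, so $\Prob{l'\in S,\ r'\in T \Given (l,r)\in S\times T}=\Prob{l'\in S,\ r'\in T}$, but this probability need not factor as $\Prob{l'\in S}\Prob{r'\in T}$. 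Your four-case expansion then only yields
$$\E{|S|\cdot|T| \Given (l,r)\in S\times T}=\E{\bigl(|S\setminus\{l\}|+1\bigr)\bigl(|T\setminus\{r\}|+1\bigr)},$$
and when $|S\setminus\{l\}|$ and $|T\setminus\{r\}|$ are positively correlated this can exceed $(\E{|S|}+1)(\E{|T|}+1)$ by an unbounded factor (take $|S\setminus\{l\}|=|T\setminus\{r\}|=X$ with $X$ of large variance). This is not a hypothetical worry: in the paper's application, $S=\widetilde{\varphi}^{-1}(p)$ and $T=\widetilde{\nu}_{II}^{-1}(q)$, the events $\{v'\in T\}$ for $v'$ in a common $\calR_{u'}$ are mutually exclusive rather than independent, and $\{v'\in T\}$ is correlated with $\{l(v')\in S\}$, so the internal independence you are trying to extract is genuinely absent.

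The paper sidesteps this with a different use of Jensen: it writes $1/(|S||T|)=\bigl(1/\sqrt{|S||T|}\bigr)^2$ and applies Jensen to the convex function $t\mapsto (1/t)^2$, reducing the problem to bounding $\E{\sqrt{|S||T|}\Given (l,r)\in S\times T}=\E{\sqrt{(|S\setminus\{l\}|+1)(|T\setminus\{r\}|+1)}}$. The Cauchy--Schwarz inequality in the form $\E{\sqrt{X}\sqrt{Y}}\leq\sqrt{\E{X}\,\E{Y}}$ then decouples the two factors using only \emph{first} moments, with no independence between $|S|$ and $|T|$ required. If one instead tries to bound $\E{XY}$ directly, as your route demands, Cauchy--Schwarz produces second moments, which the hypotheses do not control. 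So you need either to add the square-root trick or to strengthen the hypothesis of the lemma --- and the latter would break the application.
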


The first map $\nu_{I}$ is a random permutation between  $R_G$ and $R_H$.
Observe, that given subsets $L_G$ and $L_H$, the events $\{\widetilde{\varphi}(u)=p\}$ are mutually independent for different $u$'s and the expected size of $\widetilde{\varphi}^{-1}(p)$ is at most 1, here $\widetilde{\varphi}^{-1}(p)$ is
the preimage of $p$ (recall the map $\widetilde{\varphi}$ may have collisions, and hence
$\widetilde{\varphi}^{-1}(p)$ may contain more than one element). Thus, by Lemma~\ref{fact1} applied to the set $\widetilde{\varphi}^{-1}(p)\subset L_G$,
$$\Prob{\varphi(u) = p\Given L_G, L_H} \geq \Prob{\widetilde{\varphi}(u) = p\Given L_G, L_H}/2
=
\begin{cases}
x^*_{up}/2,&\text{if } u\in L_G \text{ and } p\in L_H;\\
0,& \text{otherwise}.
\end{cases}
$$
For every $u,v\in V_G$ and $p,q \in V_H$, let $\calE_{upvq}$ be the event $\{u\in L_G,v\in R_G, p\in L_H, q\in R_H\}$. Then,
$$\Prob{\calE_{upvq}} = \Prob{u \in L_G, v \in R_G, p \in L_H, q \in R_H} \geq \frac{1}{16}.$$
Thus, the probability that $\varphi (u) = p$ and $\nu_I (v) = q$ is $\Omega(x^*_{up}/n)$.
We have
\begin{eqnarray*}
\E{\sum_{\substack{u\in L_G\\v\in R_G}} w_G(u,v) w_H (\varphi(u),\nu_{I} (v))}
&\geq& \Omega(1) \times
\sum_{u,v\in V_G}\sum_{p,q\in V_H} \frac{x^*_{up}}{n}\;w_G(u,v)w_H(p,q) \\
&\geq& \Omega(1) \times \sum_{p,q\in V_H} w_H(p,q)\sum_{u\in V_G} x^*_{up} \sum_{v\in \calW_u}\frac{w_G(u,v)}{n}\\
&\geq& \Omega(1) \times \sum_{p,q\in V_H} w_H(p,q)\sum_{u\in V_G} x^*_{up} \frac{\min\{w_G(u,v): v \in \calW_u\}}{\sqrt{n}}.
\end{eqnarray*}
On the other hand, using $\sum_{v\in V_G} y^*_{upvq}/x^*_{up} = 1$, we get
\begin{eqnarray*}
LP_I^* &=& \sum_{p,q\in V_H} w_H(p,q) \sum_{u\in V_G} x^*_{up}
\left(\sum_{v\in V_G\setminus\calW_u} \frac{y^*_{upvq}}{x^*_{up}} w_G(u,v)\right) \\
&\leq&
\sum_{p,q\in V_H} w_H(p,q)\sum_{u\in V_G} x^*_{up} \max\{w_G(u,v): v \in V_G \setminus \calW_u\}
\\
&\leq&
\sum_{p,q\in V_H} w_H(p,q)\sum_{u\in V_G} x^*_{up} \min\{w_G(u,v): v \in \calW_u\}.
\end{eqnarray*}
We now define $\nu_{II}$. For every $v \in V_G$, let
$$l(v) = \argmax_{u\in V_G} \left\{\sum_{p,q\in V_H}w_G(u,v) w_H(p,q)y^*_{upvq}\right\}.$$
We say that $(l(v),v)$ is a heavy edge. For every $u\in L_G$, let
$$\calR_u = \set{v\in R_G: l(v) = u}.$$
All sets $\calR_{u}$ are disjoint subsets of $R_G$.
Note, that $l(v)$ does not depend on the partitioning $V_G=L_G\cup R_G$ and $V_H=L_H\cup R_H$,
but $\calR_{u}$ depends on $R_G$. We now define a map
$\widetilde{\nu}_{II}:\calR_u \to R_H$ independently for each $u$
for which $\widetilde{\varphi}(u)$ is defined (even if $\varphi(u)$ is not defined). For every $v\in \calR_u$, and $q\in R_H$, define
$$z_{vq} = \frac{y^*_{u\widetilde{\varphi}(u)vq}}{x^*_{u\widetilde{\varphi}(u)}}.$$
Observe, that $\sum_{v\in \calR_u}z_{vq} \leq 1$ for each $q\in R_H$ and $\sum_{q\in R_H}z_{vq} \leq 1$ for each $v\in \calR_u$. Hence, for a fixed $\calR_u$, the vector $(z_{vq}:v\in \calR_u, q\in R_H)$ lies in the convex hull of integral
partial matchings between $\calR_u$ and $R_H$. Thus,
the fractional matching $(z_{vq}:v\in \calR_u, q\in R_H)$ can be represented as a convex combination of integral partial matchings. Pick one of them with the probability proportional to its weight in the convex combination. Call this matching $\widetilde{\nu}^u_{II}$. Note, that $\widetilde{\nu}^u_{II}$
is injective and that the supports of $\widetilde{\nu}^{u'}_{II}$ and
$\widetilde{\nu}^{u''}_{II}$ do not intersect if $u'\neq u''$ (since $\calR_{u'}\cap\calR_{u''} =\varnothing$). Let $\widetilde{\nu}_{II}$ be
the union of $\widetilde{\nu}^u_{II}$ for all $u\in L_G$. The partial map
$\widetilde{\nu}_{II}$ may not be injective and may map several vertices of $R_G$ to the same vertex $q$. Thus, for every $q$ in the image
of $R_G$, we pick uniformly at random one preimage $v$ and set $\nu_{II} (v) = q$. We define $\nu_{II}$ on the rest of $R_G$ arbitrarily.

Fix $L_G$, $L_H$, $R_G = V_G\setminus L_G$, $R_H = V_H\setminus L_H$, and also $\calR_u = \set{v\in R_G: l(v) = u}$ (for all $u\in L_G$). Let $u\in L_G$, $v\in \calR_u$, $p\in L_H$ and $q \in R_H$. We want to estimate the probability that $\varphi(u) = p$ and $\nu_{II}(v) = q$.  Observe, that given sets $L_G$ and $L_H$, the event $\{\widetilde{\varphi}(u) = p\text{ and } \widetilde{\nu}_{II}(v) = q\}$ is independent of all events $\{\widetilde{\varphi}(u') = p\}$ for $u'\neq u$ and all events $\{\widetilde{\nu}_{II}(v') = q\}$ for $v'\notin \calR_u$. The expected size of $\widetilde{\nu}^{-1}_{II} (q)$ is at most 1, since
\begin{multline*}
\sum_{u'\in L_G}\sum_{v'\in \calR_{u'}} \Prob{\widetilde{\nu}^{u'}_{II} (v') = q} \leq
\sum_{u'\in L_G}\sum_{v'\in \calR_{u'}} \sum_{p'\in L_H} x^*_{u'p'} y^*_{u'p'v'q}/x^*_{u'p'} \leq \\
\sum_{v'\in V_G}\sum_{p'\in V_H} y^*_{l(v')p'v'q} =
\sum_{v'\in V_G} x^*_{v'q}\leq 1.
\end{multline*}
Therefore, by Lemma~\ref{fact2},
\begin{multline*}
\Prob{\varphi(u) = p\text{ and } \nu_{II}(v) = q \given L_G, L_H, u\in L_G, v\in \calR_u, p\in L_H, q \in R_H}
\geq \\ \Prob{\widetilde{\varphi}(u) = p\text{ and } \widetilde{\nu}_{II}(v) = q\given L_G, L_H, u\in L_G,
v\in \calR_u, p\in L_H, q \in R_H}/4 = y^*_{upvq}/4.
\end{multline*}

We are now ready to estimate the value of the solution:
\begin{eqnarray*}
\E{\sum_{\substack{u\in L_G\\v\in R_G}} w_G(u,v) w_H (\varphi(u),\nu_{II} (v))}
&\geq&
\Exp\limits_{L_G,L_H}\left[\sum_{\substack{u\in L_G\\v\in \calR_u}}\; \sum_{\substack{p\in L_H\\q\in R_H}} \frac{y^*_{upvq}}{4}
\;w_G(u,v)w_H(p,q)\right]\\
&=&\frac{1}{4}\Exp\limits_{L_G,L_H}\left[\sum_{\substack{v\in R_G: l(v)\in L_G}} \sum_{\substack{p\in L_H\\q\in R_H}} y^*_{l(v)pvq}
\;w_G(l(v),v)w_H(p,q)\right]\\
&=& \frac{1}{4} \sum_{v\in V_G}\sum_{p,q\in V_H}  \Prob{\calE_{l(v)pvq}}\;y^*_{l(v)pvq}\;w_G(l(v),v)w_H(p,q)\\
&=& \frac{1}{64} \sum_{v\in V_G}\sum_{p,q\in V_H} y^*_{l(v)pvq}\;w_G(l(v),v)w_H(p,q)\\
&\geq& \frac{1}{64} \sum_{v\in V_G}\max_{u\in V_G}
\left\{\sum_{p,q\in V_H} y^*_{upvq} \;w_G(u,v)w_H(p,q)\right\}\\
&\geq& \frac{1}{64}\sum_{v\in V_G}\frac{1}{|\calW_v|}\sum_{u \in \calW_v}
\left(\sum_{p,q\in V_H} y^*_{upvq} \;w_G(u,v)w_H(p,q)\right)\\
&=&\frac{1}{64} \times \frac{LP^*_{II}}{\roundup{\sqrt{n}\;}}.
\end{eqnarray*}

This finishes the proof.

\subsection{De-randomized algorithm}
We now give a de-randomized version of the approximation algorithm. The algorithm
will iteratively find partial mappings of $V_G$ to $V_H$  and remove vertices
for which the mapping is defined. We want the LP to be valid even after we removed
some vertices from $V_G$ and $V_H$. To this end, we slightly modify the LP.
We new LP is slightly weaker than the original LP.

\rule{0pt}{12pt}
\hrule height 0.8pt
\rule{0pt}{1pt}
\hrule height 0.4pt
\rule{0pt}{6pt}

\noindent \textbf{LP Relaxation}

$$\begin{array}{lll}
\max & \displaystyle\sum_{\substack{u,v \in V_G\\p,q\in V_H}}  w_G(u,v) w_H(p,q) y_{upvq}&\\
&\sum_{p\in V_H} x_{up} \leq 1,& \text{for all }  u\in V_G;\\
&\sum_{u\in V_G} x_{up} \leq 1,& \text{for all }   p\in V_H;\\
&\sum_{u \in V_G} y_{upvq} \leq x_{vq}, & \text{for all }  v\in V_G, \;p,q\in V_H;\\
& \sum_{p\in V_H} y_{upvq} \leq x_{vq}, &\text{for all }  u,v\in V_G, \;q\in V_H;\\
&y_{upvq} = y_{vqup}, &\text{for all } u,v\in V_G,\; p,q\in V_H;\\
&x_{up}\in [0,1], &\text{for all }  u\in V_G, \; p\in V_H;\\
&y_{upvq}\in[0,1], &\text{for all } u\in V_G, \; p\in V_H.\\
\end{array}
$$
\rule{0pt}{1pt}
\hrule height 0.4pt
\rule{0pt}{1pt}
\hrule height 0.8pt
\rule{0pt}{12pt}

This LP is obtained from the original LP by replacing equalities
``$=$'' with inequalities ``$\leq$''
in the first four constraints. The integrality gap of the new LP is the same as
of the original LP. In fact, given a feasible solution $x^*$, $y^*$ of the
new LP we can always increase the values of some variables to get a feasible solution
$x^{**}$, $y^{**}$ of the original LP (then $x^{**}\geq x^*$ and
$y^{**}\geq y^*$ component-wise).

\begin{theorem}
There exists a polynomial time (deterministic) algorithm that given  an instance $\Gamma$ of \MAXQAP consisting of two weighted graphs $G=(V_G, w_G)$, $H=(V_H,w_H)$ and a solution $(x^*,y^*)$ to the LP,  of cost $LP^*$, outputs a bijection $\varphi : V_G \to V_H$ such that
$$\VALUEQAP (\Gamma, \varphi) \geq \frac{LP^*}{O(\sqrt{n})}.$$
\end{theorem}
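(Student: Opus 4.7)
The plan is to derandomize the $O(\sqrt n)$-approximation algorithm of \prettyref{thm:analys} via the method of conditional expectations, using a pessimistic estimator extracted from its analysis. Because $\psi$ is computed as a maximum weight matching, the (random) output of the randomized algorithm satisfies
\[
\VALUEQAP(\Gamma,\text{output}) \;\geq\; \sum_{u\in \widetilde L_G,\, v\in R_G} w_G(u,v)\, w_H(\varphi(u),\nu(v))
\]
for every matching $\nu : R_G\to R_H$, in particular for $\nu \in \{\nu_I,\nu_{II}\}$. The proof of \prettyref{thm:analys} shows that, averaged over the partition $(L_G,L_H)$, the random map $\widetilde\varphi$, the collision step, and the randomness in $\nu_I,\nu_{II}$, the right hand side is $\Omega(LP^*/\sqrt n)$. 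We define our pessimistic estimator $\Phi$ to be the sum of these two conditional expectations over the randomness not yet fixed by the derandomization; by construction $\Phi$ is a deterministic function of the current partial choices and a lower bound on the (still-random) output value.

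Derandomization then proceeds in three stages. (i)~Replace the uniform balanced partition by a draw from a polynomial-size pairwise-independent family of balanced partitions, independently on $V_G$ and $V_H$; the analysis uses the partition only through $\Prob{\calE_{upvq}}\geq 1/16$, which follows from $2$-wise independence on each side. We enumerate the product family and keep the pair that maximises $\Phi$. (ii)~Given the partition, the variables $\{\widetilde\varphi(u)\}_{u\in L_G}$ are mutually independent, so we fix them one at a time: at each $u$ we try every $p\in L_H\cup\{\bot\}$ and keep the value that maximises the conditional $\Phi$. (iii)~The uniform collision resolution is derandomized in the same fashion. The matching $\psi$ is then computed as in the randomized algorithm, and an arbitrary bijection is placed on leftover vertices. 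Since $\Phi$ never decreases, the final deterministic bijection satisfies $\VALUEQAP(\Gamma,\varphi) \geq \Phi_{\text{final}} \geq \E{\Phi_{\text{initial}}} = \Omega(LP^*/\sqrt n)$.

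The relaxed LP with ``$\leq$'' constraints enables a cleaner iterative variant that sidesteps derandomising the partition entirely: at each step, solve the modified LP on the current instance, commit deterministically to one assignment $\varphi(u)=p$ that maximises the pessimistic estimator of the remaining randomised algorithm, delete $u$ and $p$, and repeat. Feasibility of the restricted solution $(x^*,y^*)$ on the residual instance relies exactly on the inequality constraints, which is why the modified LP was introduced above. The main technical obstacle in either variant is to express $\Phi$ in a polynomial-time closed form under arbitrary conditioning. For $\nu_I$ this reduces to the identity $\Prob{\nu_I(v)=q}=1/|R_H|$; for $\nu_{II}$ one uses $\Prob{\widetilde\nu_{II}(v)=q\mid \widetilde\varphi(l(v))=p}=y^*_{l(v)pvq}/x^*_{l(v)p}$, the marginal of a single draw from the convex decomposition, which is already present in the LP solution. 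Once $\Phi$ is written in these terms, each step of the derandomization becomes a routine polynomial-time computation.
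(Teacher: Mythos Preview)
Your plan diverges substantially from the paper's derandomization, and the central step is not established. The paper does \emph{not} run conditional expectations at all. For the light-edge case ($LP_I^*\ge LP_{II}^*$) it replaces the random $\varphi$ and $\nu_I$ by two optimal bipartite matchings (which dominate the random choices termwise), and then replaces the random partition by deterministic greedy \textsc{Max~Cut}/\textsc{Max~DiCut}, losing only constant factors. For the heavy-edge case ($LP_{II}^*\ge LP_I^*$) it invokes the existence (from the randomized analysis) of a disjoint star packing of value $\Omega(LP^*/\sqrt n)$, uses an averaging/volume argument to find a \emph{single} star whose profit is at least $C/\sqrt n$ times the LP mass it touches, commits that star, deletes it, and recurses on the residual LP --- this is where the inequality form of the LP is actually used. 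There is no pessimistic estimator and no per-vertex commitment.

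The gap in your proposal is the estimator $\Phi$ for the $\nu_{II}$ branch. The bound in \prettyref{thm:analys} is not an equality for a conditional expectation; it is obtained through Lemma~\ref{fact2}, which in turn uses Jensen and Cauchy--Schwarz on the random sizes of the collision sets $\widetilde\varphi^{-1}(p)$ and $\widetilde\nu_{II}^{-1}(q)$. Once you start fixing $\widetilde\varphi(u)$ vertex by vertex, those sizes become partially deterministic and the inequality $\Pr[\varphi(u)=p,\nu_{II}(v)=q\mid\ldots]\ge y^*_{upvq}/4$ no longer follows from the same argument; you would need to exhibit a closed-form $\Phi$ that (a) lower-bounds the true conditional expectation, (b) is efficiently computable from the current partial assignment, and (c) is a supermartingale under each of your three derandomization moves. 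Knowing the single-coordinate marginal $\Pr[\widetilde\nu_{II}(v)=q\mid\widetilde\varphi(l(v))=p]=y^*_{l(v)pvq}/x^*_{l(v)p}$ is not enough: the $\nu_{II}$ collision step couples $v$ with all $v'\in\calR_{u'}$ for $u'\neq l(v)$ through $\widetilde\nu_{II}^{-1}(q)$, and those events depend on the \emph{other} $\widetilde\varphi(u')$ you are in the process of fixing. Your sentence ``once $\Phi$ is written in these terms, each step becomes a routine computation'' is exactly the point that needs a proof; as written it is an assertion, not an argument. The alternative ``commit one vertex and recurse'' paragraph is similarly underspecified --- the paper's recursion commits an entire star (with a matching on its leaves) per step, and the charging argument that makes this work is the LP-volume inequality, not a pessimistic estimator.
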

\begin{proof}
The existence of the map $\varphi$ follows from Theorem~\ref{thm:analys}.
We have already established that either $LP^*_I \geq LP^*_{II}$ (see Theorem~\ref{thm:analys}
for definitions) and then
$$
\sum_{u,v\in V_G}\sum_{p,q\in V_H} \frac{x^{*}_{up}}{n} w_G(u,v) w_H(p,q)\geq C_{r.alg}\frac{LP^*}{\sqrt{n}};$$
or $LP^*_{II} \geq LP^*_{I}$ and then there exists a map $\varphi_{r.alg}:V_G \to V_H$
(returned by the randomized algorithm) and
a disjoint set of stars $\calS = \{(u, \calR_u)\}$
(each with the center in the vertex $u\in V_G$ and leaves $\calR_u\subset V_G$)
such that
$$\sum_{(u,\calR_u) \in \calS}
\sum_{v\in \calR_u}   w_G(u,v) w_H(\varphi_{r.alg}(u),\varphi_{r.alg}(v)) \geq C_{r.alg}\frac{LP^*}{\sqrt{n}},$$
for some universal constant $C_{r.alg}$. We consider these cases separately.

\medskip

I. First, assume that $LP_I^*\geq LP^*_{II}$. Our approach is similar to
the approach we used in Theorem~\ref{thm:analys}. However, instead of
peaking random sets $L_G$, $L_H$ and random maps $\varphi$ and $\nu$
we pick them deterministically. We first find $\varphi$ and $\nu$
to maximize the fractional value:
$$\sum_{u\in V_G}\sum_{v\in V_G} w_G(u,v) w_H(\varphi(u),\nu(v)).$$
Then, we pick $L_G$ and $L_H$ greedily to maximize
$$\sum_{\substack{u\in L_G: \varphi(u)\in L_H\\v\in R_G: \nu(v)\in R_H}} w_G(u,v) w_H(\varphi(u),\nu(v)).$$
We map $L_G$ according to $\varphi$ and $R_G=V_G\setminus L_G$ according to $\nu$. The details are below.

Find a bijection  $\varphi:V_G\to V_H$ that maximizes
$$
\Exp_{\nu} \sum_{u\in V_G} \left[\sum_{\substack{v \in V_G}}
w_G(u,v) w_H(\varphi(u),\nu(v))\right]
=
\sum_{u\in V_G} \left[\frac{1}{n}\sum_{\substack{v \in V_G\\q\in V_H}}
w_G(u,v) w_H(\varphi(u),q)\right],
$$
here $\nu:V_G\to V_H$ is a random bijection chosen uniformly from the set of all bijections. We find the bijection by solving the maximum matching problem between
$V_G$ and $V_H$, where the cost of mapping $u\mapsto p$ equals
$$\frac{1}{n}\sum_{\substack{v \in V_G\\q\in V_H}}
w_G(u,v) w_H(p,q).$$
Then we find a bijection $\nu:V_G\to V_H$ that
maximizes
$$
\sum_{u,v\in V_G} w_G(u,v) w_H(\varphi(u),\nu(v)).
$$
Again, we do this by solving the maximum matching problem, where now the cost
of mapping  $v\mapsto q$ equals
$$
\sum_{u\in V_G} w_G(u,v) w_H(\varphi(u),q).
$$
Since for a random permutation $\nu_{I}$ the maximum is at least
$C_{r.alg}LP^*/\sqrt{n}$, we get
\begin{equation}\label{eq:phinu1}
\sum_{u\in V_G}\sum_{v\in V_G} w_G(u,v) w_H(\varphi(u),\nu(v))
\geq C_{r.alg} \frac{LP^*}{\sqrt{n}}.
\end{equation}
We now use the greedy deterministic MAX CUT approximation algorithm\footnote{
The greedy MAX CUT algorithm picks vertices from the set $V_G$ in an arbitrary
order and puts them in the sets $L_G$ or $R_G$. Thus, at every step $t$
all vertices are partitioned into three groups $L_G(t)$, $R_G(t)$ and a group
of not yet processed vertices $U_G(t)$. If the weight of edges going from $v$ to
$R_G(t)$ is greater than the weight of edges going from $v$ to $L_G(t)$, then
the algorithm adds $v$ to $L_G$, otherwise to $R_G$. The algorithm maintain
the following invariant: at every step the weight of cut edges is greater than
or equal to the weight of uncut edges. Thus, in the end, the weight of cut edges
is at least a half of the total weight of all edges.}
to partition
$V_G$ into two sets $L_G$ and $R_G$ so as to maximize
$$\sum_{u\in L_G}\sum_{v\in R_G} w_G(u,v) w_H(\varphi(u),\nu(v)).$$
The cost of cutting an edge $(u,v)$ is $w_G(u,v) w_H(\varphi(u),\nu(v))$.
The cost of the obtained solution is at least a half of (\ref{eq:phinu1}).
We now use the greedy deterministic MAX DICUT (directed maximum cut) approximation algorithm\footnote{
The greedy MAX DICUT algorithm first finds an undirected maximum cut
$(A_G, B_G)$ using the greedy MAX CUT algorithm. The cost of the undirected maximum cut
is at least a half of the total weight of all edges. Then, it
outputs the cut $(A_G, B_G)$, if more edges are directed
from $A_G$ to $B_G$ than from $B_G$ to $A_G$, it
outputs the cut $(B_G, A_G)$, otherwise.
The cost  of the directed cut is at least a quarter of the total weight of all directed edges.}
to partition $V_H$ into
sets $L_H$ and $R_H$ so as to maximize
$$\sum_{\substack{u\in L_G\\\varphi(u)\in L_H}}
\sum_{\substack{v\in R_G\\\varphi(v)\in R_H}} w_G(u,v) w_H(\varphi(u),\nu(v))
=
\sum_{\substack{p\in L_H \\ \varphi^{-1}(p)\in L_G}}
\sum_{\substack{q\in R_H\\\nu^{-1}(q)\in L_H}} w_G(\varphi^{-1}(p),\nu^{-1}(q)) w_H(p,q).$$
The cost of a directed edge $(p,q)$ is $w_G(\varphi^{-1}(p),\nu^{-1}(q)) w_H(p,q)$, if
$\varphi^{-1}(p)\in L_G$, $\nu^{-1}(q)\in R_G$; and $0$ otherwise.
The cost of the obtained solution is at least $1/8$ of
(\ref{eq:phinu1}). Thus
\begin{equation}\label{eq:derand1}
\sum_{\substack{u\in L_G: \varphi(u)\in L_H\\v\in R_G: \nu(v)\in R_H}} w_G(u,v) w_H(\varphi(u),\nu(v))
\geq \frac{C_{r.alg}}{8}\frac{LP^*}{\sqrt{n}}.
\end{equation}
Note that we do not require that $|L_G| = |L_H|$ or that $|R_G| = |R_H|$. We output the map that maps $u\in L_G$ to $\varphi(u)$ if $\varphi(u)\in L_H$; and $v\in R_G$ to $\nu(v)$ if $\nu(v)\in R_H$. It maps the remaining vertices
in an arbitrary way. The cost of the solution is at least~(\ref{eq:derand1}).

\medskip

II. We now assume that there exists a collection of disjoint stars
$\calS = \{(u, \calR_u)\}$
(each with the center in the vertex $u\in V_G$ and leaves $\calR_u\subset V_G$)
and a map $\varphi_{r.alg}:V_G \to V_H$
such that
\begin{equation}\label{eq:stars1}
\sum_{(u,\calR_u) \in \calS}
\sum_{v\in \calR_u}   w_G(u,v) w_H(\varphi_{r.alg}(u),\varphi_{r.alg}(v)) \geq C_{r.alg}\frac{LP^*}{\sqrt{n}}.
\end{equation}

Define the LP volume of sets $S\subset V_G$, $T\subset V_H$ as follows:
$$
\vol_{LP}(S,T) =
\sum_{\substack{u\in S\\v\in V_G}}
\sum_{p,q\in V_H} w_G(u, v)w_H(p,q) y^*_{upvq}
+
\sum_{u, v\in V_G}
\sum_{\substack{p\in T\\q\in V_H}} w_G(u, v)w_H(p,q) y^*_{upvq}.
$$
If $S_1,\dots, S_k$ is a partition of $V_G$ and
$T_1,\dots, T_k$ is a partition of $V_H$, then
$$\sum_{i=1}^k \vol_{LP}(S_k,T_k) = 2LP^*,$$
since on the left hand side every term of the LP is counted twice.
Particularly,
$$\sum_{(u,\calR_u) \in \calS}
\vol_{LP}(\{u\}\cup \calR_u,\varphi_{r.alg} (\{u\}\cup \calR_u)) = 2LP^*.$$
Plugging in (\ref{eq:stars1}), we get
$$
\sum_{(u,\calR_u) \in \calS}\left(
2 \sum_{v\in \calR_u}   w_G(u,v) w_H(\varphi_{r.alg}(u),\varphi_{r.alg}(v)) - \frac{C_{r.alg}}{\sqrt{n}} \vol_{LP}(\{u\}\cup \calR_u,\varphi_{r.alg} (\{u\}\cup \calR_u))\right) \geq 0.$$
This inequality implies that there exists
one star $(u^*,\calR_{u^*})$ such that
$$
2\sum_{v\in \calR_{u^*}}  w_G(u^*,v) w_H(\varphi_{r.alg}(u^*),\varphi_{r.alg}(v)) \geq \frac{C_{r.alg}}{\sqrt{n}}\;\vol_{LP}(\{u^*\}\cup \calR_{u^*},\varphi_{r.alg} (\{u^*\}\cup \calR_{u^*})).$$
We find a star $(u^*,\calR^*)$ and an injective map
$\varphi: \{u\}\cup \calR\to V_H$ satisfying this inequality.
We do this as follows: For every choice of $u$ and $\varphi(u)$,
we solve the maximum partial matching problem where the cost of assigning
$v\mapsto q$ equals
\begin{multline*}
2w_G(u,v) w_H(\varphi(u),q) - \\ - \frac{C_{r.alg}}{\sqrt{n}}\left[
\sum_{\substack{u'\in V_G}}
\sum_{p',q'\in V_H} w_G(u', v)w_H(p',q') y^*_{u'p'vq'}
+
\sum_{u', v'\in V_G}
\sum_{\substack{p'\in V_H}} w_G(u', v')w_H(p',q) y^*_{u'p'v'q}
\right].
\end{multline*}
The set of matched vertices $v$ is the set of leaves of the star; $u$ is the center.

We fix the solution to be $\varphi$ on $(u^*,\calR^*)$. We remove the
star $(u^*,\calR^*)$ from the graph $G$
and its image $(\varphi(u^*), \varphi (\calR^*))$ from the graph $H$.
We repeat the algorithm recursively for the remaining graphs (we do not resolve the LP, but we again consider two cases: $LP_{I}^*\geq LP_{II}^{*}$ and $LP_{I}^*\leq LP_{II}^{*}$).
To estimate the cost of the solution, observe that the value of the LP decreases by
\begin{align*}
\sum_{\substack{u,v \in V_G\\p,q\in V_H}} &  w_G(u,v) w_H(p,q) y^*_{upvq}-
\sum_{\substack{u,v \in V_G\setminus{(\{u^*\}\cup \calR^*)}\\p,q\in V_H
\setminus ( \{\varphi(u^*)\}\cup \varphi(\calR^*) )}}  w_G(u,v) w_H(p,q)
y^*_{upvq}
\\&
\leq \phantom{+}\sum_{\substack{u\in (\{u^*\}\cup \calR^*) ,v \in V_G\\p,q\in V_H}}
 w_G(u,v) w_H(p,q) y^*_{upvq} +
\sum_{\substack{u,v \in V_G\\ p\in ( \{\varphi(u^*)\}\cup \varphi(\calR^*) ) ,q\in V_H}}
w_G(u,v) w_H(p,q) y^*_{upvq}
\\& \phantom{\leq}
+ \sum_{\substack{u \in V_G,v \in (\{u^*\}\cup \calR^*)\\ p,q\in V_H}}
w_G(u,v) w_H(p,q) y^*_{upvq} +
\sum_{\substack{u,v \in V_G\\p\in V_H, q \in ( \{\varphi(u^*)\}\cup \varphi(\calR^*) )}}
 w_G(u,v) w_H(p,q) y^*_{upvq}\\
&= 2\vol_{LP}(\{u^*\}\cup \calR^*,\varphi(\{u^*\}\cup \calR^*)),
\end{align*}
while the profit we get from mapping $(u^*,\calR^*)\mapsto (\varphi(u^*),\varphi(\calR^*))$ is at least
$$\frac{C_{r.alg}}{2\sqrt{n}} \vol_{LP}(\{u^*\}\cup \calR^*,\varphi(\{u^*\}\cup \calR^*)).$$
Hence, the approximation ratio is at least $C_{r.alg}/(4\sqrt{n})$.
\end{proof}

\section{Acknowledgement}
We would like to thank anonymous referees for valuable comments and suggestions.

\pagebreak

\appendix
\section{Appendix}
\noindent\textbf{Lemma~\ref{fact1}}\emph{
Let $S$ be a random subset of a set $V$. Suppose that for $u\in V$,
all events $\{u'\in S\}$ where $u'\neq u$ are jointly independent of
the event $\{u\in S\}$. Let $s$ be an element of $S$ chosen uniformly at random (if $S=\varnothing$, then $s$ is not defined). Then $\Prob{u=s} \geq  \Prob{u\in S}/(\E{|S|} + 1)$.}
\begin{proof}
We have
$$\Prob{u=s} = \Prob{u\in S} \times \E{\frac{1}{|S|} \Given u\in S}.$$
By Jensen's inequality $\E{1/|S| \given u\in S} \geq 1/\E{|S| \given u\in S}$. Moreover,
$$\E{|S| \given u\in S} = \E{|S\setminus \{u\}| \given u\in S} + 1 =
\E{|S\setminus \{u\}|} + 1 \leq \E{|S|} + 1.$$
\end{proof}

\noindent\textbf{Lemma~\ref{fact2}}\emph{
Let $S$ be a random subset of a set $L$, and $T$ be a random subset of a set $R$.
Suppose that for $(l,r)\in L\times R$, all events $\{l'\in S\}$ where $l'\neq l$ and all events $\{r'\in T\}$ where $r'\neq r$ are jointly independent of the event $\{(l,r) \in S\times T\}$.
Let $s$ be an element of $S$ chosen uniformly at random, and let $t$ be an element of $T$ chosen uniformly at random.
Then, $$\Prob{(l,r) = (s,t)} \geq \frac{\Prob{(l,r)\in S\times T}}{(\E{|S|} + 1)\times (\E{|T|} + 1)}$$
(here $(s,t)$ is not defined if $S=\varnothing$ or $T=\varnothing$).
}
\begin{proof}
We have
$$\Prob{(l,r) = (s,t)} = \Prob{(l,r)\in S\times T} \times \E{\frac{1}{|S|\cdot|T|}
 \Given (l,r)\in S\times T}.$$
Note, that if $(l,r)\in S\times T$, then $S\neq \varnothing$ and $T\neq \varnothing$ and hence $1/(|S|\cdot|T|)$ is well defined. By Jensen's inequality (for the convex function $t\mapsto (1/t)^2$),
\begin{multline*}
\E{\frac{1}{|S|\cdot|T|} \Given (l,r)\in S\times T}
=\\
\E{\left(\frac{1}{\sqrt{|S|\cdot|T|}}\right)^2 \Given (l,r)\in S\times T}
\geq \left(\frac{1}{\E{\sqrt{|S|\cdot|T|}\Given (l,r)\in S\times T}}\right)^2.
\end{multline*}
Then,
\begin{eqnarray*}
\E{\sqrt{|S|\cdot|T|} \Given (l,r)\in S\times T} &=&
\E{\sqrt{(|S\setminus\{l\}| + 1) (|T\setminus\{r\}| +1)} \Given (l,r)\in S\times T}\\
&=&
\E{\sqrt{(|S\setminus\{l\}| + 1) (|T\setminus\{r\}| +1)}} \\
&\leq&
\E{\sqrt{(|S| + 1) (|T| +1)}}\\
&\le&  \sqrt{\E{|S| + 1}\E{|T| +1}},
\end{eqnarray*}
where the last inequality follows from the  Cauchy-Schwarz  inequality. This finishes the proof.
\end{proof}

\end{document}